\newcommand*\circled[2][1.6]{\tikz[baseline=(char.base)]{
    \node[shape=circle, draw, inner sep=1pt, 
        minimum height={\f@size*#1},] (char) {\vphantom{WAH1g}#2};}}
\newlist{legal}{enumerate}{10}
\setlist[legal]{label*=\arabic*.}
\newtheoremstyle{mystyle1}
  {.5pt}
  {.5pt}
  {}
  {}
  {\bfseries}
  {.}
  { }
  {\thmname{#1}\thmnumber{ #2}\thmnote{ (#3)}}
\theoremstyle{mystyle1}
\newtheorem{defn}{Definition}
\newtheorem{rem}{Remark}
\newtheoremstyle{mystyle}
  {.5pt}
  {.5pt}
  {\itshape}
  {}
  {\bfseries}
  {.}
  { }
  {\thmname{#1}\thmnumber{ #2}\thmnote{ (#3)}}
 \theoremstyle{mystyle}
\newtheorem{thm}{Theorem}
\newtheorem{lem}{Lemma}
\newtheorem{prop}{Proposition}
\setlist[legal]{label*=\arabic*.}
\theoremstyle{mystyle1}
\begin{document}

\title{Broadcast Rate Requires Nonlinear Coding in a Unicast Index Coding Instance of Size 36}

\author{\IEEEauthorblockN{Arman Sharififar, Parastoo Sadeghi, Neda Aboutorab}
\IEEEauthorblockA{\textit{School of Engineering and Information Technology},
\textit{University of New South Wales, Australia} \\
Email:\{a.sharififar, p.sadeghi, n.aboutorab\}@unsw.edu.au
}
}

\maketitle

\begin{abstract}
Insufficiency of linear coding for the network coding problem was first proved by providing an instance which is solvable only by nonlinear network coding (Dougherty  \textit{et al.}, 2005). Based on the work of Effros,  \textit{et al.}, 2015, this specific network coding instance can be modeled as a groupcast index coding (GIC) instance with 74 messages and 80 users (where a message can be requested by multiple users). This proves the insufficiency of linear coding for the GIC problem. Using the systematic approach proposed by Maleki \textit{et al.}, 2014, the aforementioned GIC instance can be cast into a unicast index coding (UIC) instance with more than 200 users, each wanting a unique message. This confirms the necessity of nonlinear coding for the UIC problem, but only for achieving the entire capacity region. Nevertheless, the question of whether nonlinear coding is required to achieve the symmetric capacity (broadcast rate) of the UIC problem remained open. In this paper, we settle this question and prove the insufficiency of linear coding, by directly building a UIC instance with only 36 users for which there exists a nonlinear index code outperforming the optimal linear code in terms of the broadcast rate.
\end{abstract}


\section{introduction}

Index coding problem was first introduced by Birk and Kol \cite{Birk1998} in the context of satellite communication in which there is a single server, broadcasting $m$ messages to a number of users via a noiseless shared channel. Each user requests one specific message from the server and may already know some other messages as its side information. Exploiting the side information of the users, the server might be able to reduce the overall number of coded messages in order to communicate the $m$ messages \cite{Arman-3-1,Arman-3-2}. The main objective can be summarized as finding the minimum number of transmissions so that all users will be able to decode their requested message. The simple model established in index coding problem can be employed to study several important communication settings, including network coding \cite{Rouayheb2010,Effros2015}, distributed storage \cite{Li2018}, coded caching \cite{Maddah-ali2014, Wan2020}, and topological interference management \cite{Jafar2014, Maleki2014}.

The connection between network coding and index coding problem was established in \cite{Rouayheb2010}, in which a reduction method was provided to map any instance of network coding to a corresponding instance of index coding. This connection between network coding and index coding was extended in \cite{Effros2015} to include the general encoding and decoding functions so that the solution of an index coding instance will be suitably converted as a solution for the equivalent network coding instance.

While in the unicast index coding (UIC), each message can be requested by only one user, this scenario can be extended to allow multiple users to request the same message, which is referred to as groupcast index coding (GIC). In fact, the aforementioned equivalence, turns any instance of network coding to an instance of index coding such that some messages are requested by more than one user.
In \cite{Maleki2014}, a systematic approach was proposed to construct an equivalent UIC instance for an arbitrarily GIC instance such that each message requested by multiple users is mapped to multiple distinct auxiliary messages requested by only one user. However, this construction method requires that the rate of auxiliary messages to be different from the rate of other messages, resulting in an asymmetric rate UIC problem.

Index coding schemes are broadly categorized into linear and nonlinear codes. Although linear index coding has been the center of attention due to their straightforward encoding and decoding processes, for the general index coding problem, they can be outperformed by nonlinear codes \cite{Dougherty2005,Rouayheb2010, Effros2015,Arman-4-2, Arman-4-3}. Insufficiency of linear codes was first proved in \cite{Dougherty2005} in the context of network coding by providing a network instance which is not solvable by any linear codes, while it can be solved by a nonlinear network code. Having established an efficient method of reducing a network coding instance to a corresponding instance of index coding \cite{Rouayheb2010,Effros2015}, the necessity of nonlinear coding was proved for the GIC problem. Furthermore, the construction technique of mapping an arbitrary GIC instance to an equivalent UIC instance \cite{Maleki2014} means that linear coding is also insufficient for the UIC problem, but only for achieving the entire capacity region.

\textbf{Open Problem \cite[Remark 1]{Maleki2014}}: For the UIC problems, the question of whether linear coding is necessary to achieve the broadcast rate remained as an open problem, despite the fact that broadcast rate is  most commonly used and studied in the index coding literature. 
\\
In this paper, we settle this open question by constructing a UIC instance for which there exists a nonlinear index code that can outperform the optimal linear coding in terms of the broadcast rate. This UIC instance consists of two separate UIC subinstances, which are connected to each other in some specific ways. In fact, the characterization of these subinstances is inspired by the two network coding subinstances in \cite{Dougherty2005}, where for the first one, linear coding cannot be optimal over any finite field with odd characteristic (i.e., a field with odd cardinality), and for the second one, linear coding is not able to achieve the broadcast rate over any finite field with characteristic two (i.e., field with even cardinality). In this work, the side information set of each user is designed, aiming at requiring the optimal index coding solution to satisfy the same specific constraints which must be met by any network code solving the network coding counterexample in \cite{Dougherty2005}.

One main merit of the UIC instance built in this paper, is its considerable simplicity with regard to the number of users (or messages) compared to the instances obtained based on mapping methods in \cite{Effros2015} and \cite{Maleki2014}. In fact, mapping the network coding counterexample in \cite{Dougherty2005} by the reduction method in \cite{Effros2015} results in a GIC instance, including 74 messages and 80 users (indeed, 148 users if each user requests only one specific message). This GIC instance using the mapping in \cite{Maleki2014} is turned into its equivalent asymmetric rate UIC instance, consisting of more than 200 users. However, the UIC instance provided in this paper comprises only 36 users.

\subsection{Summary of Contributions}
\begin{enumerate}[leftmargin=*]
    \item By directly designing a UIC instance, we prove that there exists a nonlinear index code which outperforms the optimal linear coding to achieve the symmetric capacity rate. This implies that linear coding is insufficient for the general UIC problem for achieving the broadcast rate. This UIC instance consists of two distinct subinstances which are connected in two different ways (Section \ref{sec:03}).
    \item We prove that for the first subinstance, its broadcast rate is not achievable by any linear coding over finite field with odd characteristic, while it can be achieved by a scalar binary linear code (Section \ref{sec:04}).
    \item For the second subinstance, we prove that linear coding over any finite field with characteristic two will not be able to achieve its broadcast rate. However, we show that there exists a scalar binary nonlinear index code which is optimal (Section \ref{sec:05}).
\end{enumerate}

\section{System Model and Background} \label{sec:02}
This section provides an overview of the system model and relevant background and definitions in index coding problem.
\subsection{Notation}
Scalar small letters such as $n$ denote an integer number where  $[n]:=\{1,...,n\}$. Scalar capital letters such as $L$ denote a set, whose cardinality is denoted by $|L|$ and $n_{L}:=\{n_l, l\in L\}$. Symbols in bold face such as $\boldsymbol{l}$ and $\boldsymbol{L}$ denote a vector and a matrix, respectively, with $\boldsymbol{L}^T$ denoting the transpose of matrix $\boldsymbol{L}$.  A calligraphic symbol such as $\mathcal{L}$ is used to denote a set whose elements are sets.\\
We write $\mathcal{X}$ to denote a finite alphabet. We use $\mathbb{F}_q$ to denote a finite field of size $q$ and write $\mathbb{F}_{q}^{n\times m}$ to denote the vector space of all $n\times m$ matrices over the field $\mathbb{F}_{q}$. Throughout the paper, $\boldsymbol{I}_n$ denotes the $n\times n$ identity matrix, and $\boldsymbol{0}_{n\times m}$ represents a full-zero matrix of size $n \times m$.

\subsection{System Model}
Consider a broadcast communication system in which a server transmits a set of $mt$ messages $X=\{x_{i,j},\ i\in[m],\ j\in [t]\},\ x_{i,j}\in \mathcal{X}$, to a number of users $U=\{u_i,\ i\in[m]\}$ through a noiseless broadcast channel. Each user $u_i$ wishes to receive a message of length $t$, $X_i=\{x_{i,j},\  j\in[t]\}$ and may have a prior knowledge of a subset of the messages $S_i:=\{x_{l,j},\ l\in A_{i},\ j\in[t]\},\ A_{i}\subseteq[m]\backslash \{i\}$, which is referred to as its side information set. The main objective is to minimize the number of coded messages which is required to be broadcast so as to enable each user to decode its requested message.
An instance of index coding problem $\mathcal{I}$ can be characterized by either the side information set of all users as $\mathcal{I}=\{A_i, i\in[m]\}$, or by their interfering message set $B_i=[m]\backslash (A_i \cup \{i\})$ as $\mathcal{I}=\{B_i, i\in[m]\}$.

\subsection{General Index Code}
\begin{defn}[$\mathcal{C}_{\mathcal{I}}$: Index Code for $\mathcal{I}$]
Given an instance of index coding problem $\mathcal{I}=\{A_i, i\in[m]\}$, a $(t,r)$ index code is defined as $\mathcal{C}_{\mathcal{I}}=(\phi_{\mathcal{I}},\{\psi_{\mathcal{I}}^{i}\})$, where
 \begin{itemize}[leftmargin=*]
     \item $\phi_{\mathcal{I}}: \mathcal{X}^{mt}\rightarrow \mathcal{X}^{r}$ is the encoder function which maps the $mt$ message symbol $x_{i,j}\in \mathcal{X}$ to the $r$ coded messages as $Y=\{y_1,\dots,y_r\}$, where $y_k\in \mathcal{X}, \forall k\in [r]$.
     \item $\psi_{\mathcal{I}}^{i}:$ represents the decoder function, where for each user $u_i, i\in[m]$, the decoder $\psi_{\mathcal{I}}^{i}: \mathcal{X}^{r}\times \mathcal{X}^{|A_i|t}\rightarrow \mathcal{X}^{t}$ maps the received $r$ coded messages $y_k\in Y, k\in[r]$ and the $|A_i|t$ messages $x_{l,j}\in S_i$ in the side information to the $t$ messages $\psi_{\mathcal{I}}^{i}(Y,S_i)=\{\hat{x}_{i,j}, j\in [t]\}$, where $\hat{x}_{i,j}$ is an estimate of $x_{i,j}$.
 \end{itemize}
\end{defn}

\begin{defn}[$\beta(\mathcal{C}_{\mathcal{I}})$: Broadcast Rate of $\mathcal{C}_{\mathcal{I}}$]
Given an instance of index coding problem $\mathcal{I}$, the broadcast rate of a $(t,r)$ index code $\mathcal{C}_{\mathcal{I}}$ is defined as $\beta(\mathcal{C}_{\mathcal{I}})=\frac{r}{t}$.
\end{defn}

\begin{defn}[$\beta_{\mathcal{I}}$: Broadcast Rate of $\mathcal{I}$]
Given an instance of index coding problem $\mathcal{I}$, the broadcast rate $\beta_{\mathcal{I}}$ is defined as
\begin{equation} \label{eq:opt-rate}
    \beta_{\mathcal{I}}=\inf_{t} \inf_{\mathcal{C}_{\mathcal{I}}} \beta(\mathcal{C}_{\mathcal{I}}).
\end{equation}
\end{defn}

\begin{defn}[Scalar and Vector Index Code]
The index code $\mathcal{C}$ is considered to be scalar if $t=1$. Otherwise, it is called a vector code. For scalar codes, we use $x_i=x_{i,1}, \forall i\in [m]$, for simplicity.
\end{defn}
\subsection{Linear Index Code}
We can assume that the finite alphabet $\mathcal{X}$ is selected as a finite field $\mathbb{F}_q$, so linear operations are well-defined. Let $\boldsymbol{x}=[\boldsymbol{x}_1,\dots,\boldsymbol{x}_m]^{T}\in \mathbb{F}_{q}^{mt\times 1}$ denote the message vector, where $\boldsymbol{x}_{i}=[x_{i,1},\dots,x_{i,t}]\in \mathbb{F}_{q}^{1\times t}$ is the requested message vector by user $u_i, \forall i\in[m]$. We denote the side information vector of $u_i$ as $\boldsymbol{x}_{A_i}=[\boldsymbol{x}_{1}^{i},\dots, \boldsymbol{x}_{m}^{i}]^{T}$, where $\boldsymbol{x}_{j}^{i}=\boldsymbol{x}_{j}$, if $j\in A_i$ and $\boldsymbol{x}_{j}^{i}=\boldsymbol{0}_{1\times t}$, otherwise. 

\begin{defn}[$\mathcal{L}_{\mathcal{I}}$: Linear Index Code for $\mathcal{I}$]
Given an instance of index coding problem $\mathcal{I}=\{A_i, i\in[m]\}$, a $(t,r)$ linear index code is defined as $\mathcal{L}_{\mathcal{I}}=(\boldsymbol{H},\{\psi_{\mathcal{L}}^{i}\})$, where
  \begin{itemize}[leftmargin=*]
      \item $\boldsymbol{H}: \mathbb{F}_{q}^{mt\times 1}\rightarrow \mathbb{F}_{q}^{r\times 1}$ is the encoder matrix which maps the message vector $\boldsymbol{x}=[\boldsymbol{x}_1,\dots,\boldsymbol{x}_m]^{T}\in \mathbb{F}_{q}^{mt\times 1}$  to a coded message vector $\boldsymbol{y}=[y_1,\dots,y_r]^{T}\in \mathbb{F}_{q}^{{r}\times 1}$ as follows
      \begin{equation} \nonumber
      \boldsymbol{y}=\boldsymbol{H}\boldsymbol{x}=\sum_{i\in [m]} \boldsymbol{H}_{i}\boldsymbol{x}_i^{T}.
      \end{equation}
      Here $\boldsymbol{H}_{i}\in \mathbb{F}_{q}^{r\times t}$ is the encoder matrix of the $i$-th message vector $\boldsymbol{x}_i$ such that
      $\boldsymbol{H}=
      \left [\begin{array}{c|c|c}
        \boldsymbol{H}_1 & \dots & \boldsymbol{H}_m
      \end{array}
     \right ]\in \mathbb{F}_{q}^{r\times mt}$.
     \item $\psi_{\mathcal{I}}^{i}$ represents the linear decoder function for user $u_{i}, i\in[m]$, where $\psi_{\mathcal{I}}^{i}(\boldsymbol{y}, \boldsymbol{x}_{A_i})$ maps the received coded message $\boldsymbol{y}$ and its side information vector $\boldsymbol{x}_{A_i}$ to $\hat{\boldsymbol{x}}_{i}$, which is an estimate of the requested message vector $\boldsymbol{x}_i$.
  \end{itemize}
\end{defn}

\begin{prop}
It can be shown \cite{Sharififar2021} that the necessary and sufficient condition for linear decoder $\psi_{\mathcal{I}}^{i}, \forall i\in[m]$ to be able to correctly decode the requested message vector $\boldsymbol{x}_i$ is
   \begin{equation} \label{eq:dec-cond}
       \mathrm{rank} \ \boldsymbol{H}_{\{i\}\cup B_i}= \mathrm{rank} \ \boldsymbol{H}_{B_i} + t,
   \end{equation}
where $\boldsymbol{H}_L$ denotes the matrix $\left [\begin{array}{c|c|c}
       \boldsymbol{H}_{l_1} & \dots & \boldsymbol{H}_{l_{|L|}}
     \end{array}
    \right ]$ for the given set $L=\{l_1,\dots,l_{|L|}\}$. 
\end{prop}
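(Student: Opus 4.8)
The stated rank condition is the standard linear-algebraic characterization of when a linear code lets user $u_i$ recover $\boldsymbol{x}_i$, and I would obtain it in three short moves. First, reduce to the effective channel seen by $u_i$: since $[m]$ is the disjoint union $\{i\}\cup A_i\cup B_i$, we have $\boldsymbol{y}=\boldsymbol{H}_i\boldsymbol{x}_i^{T}+\sum_{l\in A_i}\boldsymbol{H}_l\boldsymbol{x}_l^{T}+\boldsymbol{H}_{B_i}\boldsymbol{x}_{B_i}^{T}$, and because $u_i$ knows $\boldsymbol{x}_{A_i}$ exactly it can form $\boldsymbol{z}:=\boldsymbol{y}-\sum_{l\in A_i}\boldsymbol{H}_l\boldsymbol{x}_l^{T}=\boldsymbol{H}_i\boldsymbol{x}_i^{T}+\boldsymbol{H}_{B_i}\boldsymbol{x}_{B_i}^{T}$. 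Correct decoding is then equivalent to the statement that $\boldsymbol{z}$ determines $\boldsymbol{x}_i$ uniquely, where $\boldsymbol{x}_i$ and the interference $\boldsymbol{x}_{B_i}$ range independently over their whole domains since the messages are unconstrained.

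Second, I would show this uniqueness is equivalent to the pair of conditions (a) $\mathrm{rank}\,\boldsymbol{H}_i=t$ and (b) the column space of $\boldsymbol{H}_i$ meets the column space of $\boldsymbol{H}_{B_i}$ only in the zero vector. Writing the collision equation $\boldsymbol{H}_i\boldsymbol{x}_i^{T}+\boldsymbol{H}_{B_i}\boldsymbol{x}_{B_i}^{T}=\boldsymbol{H}_i\boldsymbol{x}_i'^{T}+\boldsymbol{H}_{B_i}\boldsymbol{x}_{B_i}'^{T}$ as $\boldsymbol{H}_i(\boldsymbol{x}_i-\boldsymbol{x}_i')^{T}=\boldsymbol{H}_{B_i}(\boldsymbol{x}_{B_i}'-\boldsymbol{x}_{B_i})^{T}$, uniqueness of $\boldsymbol{x}_i$ says precisely that every solution of this equation has $\boldsymbol{x}_i=\boldsymbol{x}_i'$. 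Setting the right-hand side to $\boldsymbol 0$ forces (a); a nonzero common vector of the two column spaces would exhibit a valid collision, forcing (b); conversely, under (a)+(b) the equation puts $\boldsymbol{H}_i(\boldsymbol{x}_i-\boldsymbol{x}_i')^{T}$ in the trivial intersection, hence equal to $\boldsymbol 0$, and then $\boldsymbol{x}_i=\boldsymbol{x}_i'$, while a concrete linear decoder is obtained by projecting $\boldsymbol{z}$ onto the column space of $\boldsymbol{H}_i$ along a complement containing the column space of $\boldsymbol{H}_{B_i}$ and then applying a left inverse of $\boldsymbol{H}_i$.

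Third, I would translate (a)+(b) into \eqref{eq:dec-cond} with the subspace dimension identity $\mathrm{rank}\,[\boldsymbol{A}\mid\boldsymbol{B}]=\mathrm{rank}\,\boldsymbol{A}+\mathrm{rank}\,\boldsymbol{B}-d$, where $d$ is the dimension of the intersection of the two column spaces. Taking $\boldsymbol{A}=\boldsymbol{H}_i$ and $\boldsymbol{B}=\boldsymbol{H}_{B_i}$, so that $[\boldsymbol{A}\mid\boldsymbol{B}]=\boldsymbol{H}_{\{i\}\cup B_i}$ up to a column permutation, gives $\mathrm{rank}\,\boldsymbol{H}_{\{i\}\cup B_i}=\mathrm{rank}\,\boldsymbol{H}_i+\mathrm{rank}\,\boldsymbol{H}_{B_i}-d$; since $\mathrm{rank}\,\boldsymbol{H}_i\le t$ and $d\ge 0$ the right side is at most $t+\mathrm{rank}\,\boldsymbol{H}_{B_i}$, with equality iff $\mathrm{rank}\,\boldsymbol{H}_i=t$ and $d=0$, i.e.\ iff (a) and (b) hold, which by the second move is exactly the decodability condition. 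The only step that needs real care is the necessity half of the second move: one must invoke that $\boldsymbol{x}_i$ and $\boldsymbol{x}_{B_i}$ are independent, so that whenever $\boldsymbol{H}_i\boldsymbol{v}^{T}$ lies in the column space of $\boldsymbol{H}_{B_i}$ the two colliding message configurations are both legitimate inputs and hence \emph{no} decoder (linear or otherwise) can separate them; the reduction and the rank bookkeeping are routine.
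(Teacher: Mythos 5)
The paper does not contain a proof of this proposition; it is stated with a citation to an external reference (Sharififar \emph{et al.}), so there is no in-text argument to compare against. Your proof, however, is correct and complete: after cancelling the known side information to form $\boldsymbol{z}=\boldsymbol{H}_i\boldsymbol{x}_i^{T}+\boldsymbol{H}_{B_i}\boldsymbol{x}_{B_i}^{T}$, you correctly characterize decodability as (a) $\boldsymbol{H}_i$ having full column rank $t$ together with (b) the column spaces of $\boldsymbol{H}_i$ and $\boldsymbol{H}_{B_i}$ intersecting trivially, with the collision argument giving necessity for \emph{any} decoder and the projection-plus-left-inverse construction giving a genuine \emph{linear} decoder for sufficiency (which is the right thing to exhibit, since the proposition is phrased for linear decoders). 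The translation to \eqref{eq:dec-cond} via the identity $\mathrm{rank}\,[\boldsymbol{A}\mid\boldsymbol{B}]=\mathrm{rank}\,\boldsymbol{A}+\mathrm{rank}\,\boldsymbol{B}-\dim\bigl(\mathrm{col}(\boldsymbol{A})\cap\mathrm{col}(\boldsymbol{B})\bigr)$, combined with $\mathrm{rank}\,\boldsymbol{H}_i\le t$ and $\dim\ge 0$, is exactly right: equality holds iff (a) and (b) both hold. This is the standard linear-algebraic argument for this fact, and there are no gaps.
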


\begin{defn}[$\lambda_{q}(\mathcal{L}_{\mathcal{I}})$: Linear Broadcast Rate of $\mathcal{L}_{\mathcal{I}}$ over $\mathbb{F}_q$]
Given an instance of index coding problem $\mathcal{I}$, the linear broadcast rate of a $(t,r)$ index code $\mathcal{L}_{\mathcal{I}}$ over field $\mathbb{F}_q$ is defined as $\lambda_{q}(\mathcal{L}_{\mathcal{I}})=\frac{r}{t}$.
\end{defn}

\begin{defn}[$\lambda_{\mathcal{I},q}$: Linear Broadcast Rate of $\mathcal{I}$ over $\mathbb{F}_q$]
Given an instance of index coding problem $\mathcal{I}$, the linear broadcast rate $\lambda_{\mathcal{I},q}$ over field $\mathbb{F}_q$ is defined as
    \begin{equation} \nonumber
        \lambda_{\mathcal{I},q}=\inf_{t} \inf_{\mathcal{L}_{\mathcal{I}}} \lambda_{q}(\mathcal{L}_{\mathcal{I}}).
    \end{equation}
\end{defn}

\begin{defn}[$\boldsymbol{H}_{\mathcal{I},q}^{\ast}$: Optimal Encoder Matrix of $\mathcal{I}$ over $\mathbb{F}_q$]
Given an instance of index coding problem $\mathcal{I}$, an encoder matrix, satisfying the decoding condition in \eqref{eq:dec-cond} for all users, is said to be optimal over $\mathbb{F}_q$ if the number of its rows is equal to ${\lambda_{\mathcal{I},q}}t$. Such an optimal encoder matrix is denoted by $\boldsymbol{H}_{\mathcal{I},q}^{\ast}$.
\end{defn}

\begin{defn}[$\lambda_{\mathcal{I}}$: Linear Broadcast Rate for $\mathcal{I}$]
Given an instance of index coding problem $\mathcal{I}$, the linear broadcast rate is defined as 
\begin{equation} \label{eq:opt-lin-rate}
    \lambda_{\mathcal{I}}=\min_{q} \lambda_{\mathcal{I},q}.
\end{equation}

\end{defn}

\begin{rem}
As it can be observed from \eqref{eq:opt-rate}, for the broadcast rate $\beta_{\mathcal{I}}$, the minimization is not taken over the alphabet size $|\mathcal{X}|$, because it is proved that the broadcast rate is independent of the chosen alphabet size $|\mathcal{X}|$ \cite{Cannons2005,Arbabjolfaei2018}. In contrast, the linear broadcast rate $\lambda_{\mathcal{I}}$ in \eqref{eq:opt-lin-rate} is achieved by searching through all the possible field sizes $q$, recognizing its dependency on $q$. In fact, given a field size $q_1$, reference \cite{Lubetzky2009} provides a method to construct an index coding instance $\mathcal{I}$ to prove that there exists another field size $q_2$ such that $\lambda_{\mathcal{I},q_2}<\lambda_{\mathcal{I},q_1}$, and this gap can be significant for sufficiently large number of users. Such a dependency on the field size is the underlying reason causing the linear codes to be insufficient for achieving the broadcast rate in general for the index coding problem.
\end{rem}

\subsection{Graph Definitions}
\begin{defn}[$\mathcal{G}_{\mathcal{I}}$: Graph Representation of $\mathcal{I}$]
The index coding instance $\mathcal{I}=\{A_i, i\in[m]\}$ can be represented as a directed graph $\mathcal{G}_{\mathcal{I}}=(V,E)$, where $V=[m]$ and $E\subseteq[m]\times [m]$, respectively, denote the vertex and edge set such that $(i,j)\in E$ if and only if (iff) $i\in A_j$, for all $(i,j)\in [m]\times [m]$. Graph $\mathcal{G}_{\mathcal{I}}^{\prime}=(V^{\prime}, E^{\prime})$ is considered as an induced subgraph of $\mathcal{G}_{\mathcal{I}}$, where $V^{\prime}\subseteq[m]$ and $E^{\prime}\subseteq V^{\prime}\times V^{\prime}$ such that $(i,j)\in E^{\prime}$ iff $i\in A_j\cap V^{\prime}$. 
\end{defn}

\begin{defn}[Acyclic and Independent Set of $\mathcal{I}$]
If the subgraph $\mathcal{G}_{\mathcal{I}}^{\prime}=(V^{\prime}, E^{\prime})$ forms an acyclic graph, then $V^{\prime}$ is said to be acyclic set of $\mathcal{I}$. If thic acyclic subgraph has no edges, i.e., $E^{\prime}=\emptyset$, then $V^{\prime}$ is called the independent set of $\mathcal{I}$.
\end{defn}

\begin{defn}[Minimal Cyclic Set of $\mathcal{I}$]
If subgraph $\mathcal{G}_{\mathcal{I}}^{\prime}=(V^{\prime}, E^{\prime})$ forms a cycle such that any of its induced subgraphs $\mathcal{G}_{\mathcal{I}}^{\prime\prime}=(V^{\prime\prime}, E^{\prime\prime})$, $V^{\prime\prime}\subset V^{\prime}$, is an acyclic graph, then $V^{\prime}$ is called the minimal cyclic set of $\mathcal{I}$.
\end{defn}

\begin{defn}[Maximum Acyclic Induced Subgraph (MAIS) of $\mathcal{I}$]
If $\mathcal{G}_{\mathcal{I}}^{\prime}=(V^{\prime}, E^{\prime})$ forms an acyclic subgraph with the maximum size of $V^{\prime}$, then $V^{\prime}$ is considered as a MAIS set of $\mathcal{I}$, and $\beta_{\mathrm{MAIS}(\mathcal{I})}=|V^{\prime}|$ is called the MAIS bound on $\mathcal{I}$.
\end{defn}

\begin{prop}[\mdseries Bar-Yossef \textit{et all}. \cite{Bar-Yossef2011}]
Given the index coding instance $\mathcal{I}$ and a finite field $\mathbb{F}_q$, we have
\end{prop}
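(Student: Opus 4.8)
The statement is the standard MAIS lower bound, $\beta_{\mathrm{MAIS}(\mathcal{I})}\le\beta_{\mathcal{I}}\le\lambda_{\mathcal{I},q}$. The plan is to dispatch the right-hand inequality as essentially definitional and to prove the left-hand one by an induction driven by the decoding condition \eqref{eq:dec-cond}. For the right-hand inequality: every $(t,r)$ linear index code over $\mathbb{F}_q$ is in particular a $(t,r)$ index code with alphabet $\mathcal{X}=\mathbb{F}_q$, so the infimum in \eqref{eq:opt-rate} defining $\beta_{\mathcal{I}}$ is taken over a family containing all the codes used to define $\lambda_{\mathcal{I},q}$; hence $\beta_{\mathcal{I}}\le\lambda_{\mathcal{I},q}$ for every $q$, and with \eqref{eq:opt-lin-rate} also $\beta_{\mathcal{I}}\le\lambda_{\mathcal{I}}$. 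For the left-hand inequality, I would first fix a MAIS set $V'$ with $s=|V'|=\beta_{\mathrm{MAIS}(\mathcal{I})}$ and relabel its vertices $v_1,\dots,v_s$ in reverse topological order of the acyclic subgraph induced on $V'$, so that $A_{v_k}\cap V'\subseteq\{v_{k+1},\dots,v_s\}$ and therefore $\{v_1,\dots,v_{k-1}\}\subseteq B_{v_k}$ for every $k\in[s]$.

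The linear bound then comes from a rank induction. Let $\boldsymbol{H}$ be the encoder matrix of any valid $(t,r)$ linear code, with blocks $\boldsymbol{H}_i\in\mathbb{F}_q^{r\times t}$. I claim $\mathrm{rank}\,\boldsymbol{H}_{\{v_1,\dots,v_k\}}=kt$ for all $k$: the case $k=1$ is exactly \eqref{eq:dec-cond} for $v_1$, which forces the $t$ columns of $\boldsymbol{H}_{v_1}$ to be linearly independent; for the step, \eqref{eq:dec-cond} for $v_k$ states that adjoining the columns of $\boldsymbol{H}_{v_k}$ to $\boldsymbol{H}_{B_{v_k}}$ increases the rank by exactly $t$, and since $\{v_1,\dots,v_{k-1}\}\subseteq B_{v_k}$ implies $\mathrm{span}\,\boldsymbol{H}_{\{v_1,\dots,v_{k-1}\}}\subseteq\mathrm{span}\,\boldsymbol{H}_{B_{v_k}}$, the columns of $\boldsymbol{H}_{v_k}$ are linearly independent modulo $\mathrm{span}\,\boldsymbol{H}_{\{v_1,\dots,v_{k-1}\}}$, giving $\mathrm{rank}\,\boldsymbol{H}_{\{v_1,\dots,v_k\}}=(k-1)t+t=kt$. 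Taking $k=s$ shows $\boldsymbol{H}$ has at least $st$ rows, so $r/t\ge s$, and the infimum over $t$ and over codes gives $\lambda_{\mathcal{I},q}\ge s$. For the fully general (nonlinear) bound $\beta_{\mathcal{I}}\ge s$, I would instead draw the messages i.i.d.\ uniform over $\mathcal{X}$ and observe that, decoding $v_s$ first, then $v_{s-1}$, down to $v_1$, the coded output $Y$ together with all messages indexed outside $V'$ determines $X_{v_1},\dots,X_{v_s}$; then $r\log|\mathcal{X}|\ge H(Y)\ge H(\{X_i:i\in V'\})=st\log|\mathcal{X}|$ gives $\beta(\mathcal{C}_{\mathcal{I}})\ge s$ and hence $\beta_{\mathcal{I}}\ge s$.

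The main obstacle I anticipate is the bookkeeping around side information that leaves $V'$: one has to check that the reverse-topological relabeling really does yield the containment $\{v_1,\dots,v_{k-1}\}\subseteq B_{v_k}$ that both inductions rely on, and that conditioning on (respectively, discarding the column blocks of) every message with index outside $V'$ cleanly reduces the instance to its induced acyclic subgraph without loss. Once that ordering is pinned down, the remaining work --- the elementary linear-algebra rank induction and the chain-rule expansion of $H(\{X_i:i\in V'\})$ --- is routine.
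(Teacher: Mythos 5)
Note first that the paper does not actually prove this proposition --- it is stated with a citation to Bar-Yossef et al.\ \cite{Bar-Yossef2011} and left at that --- so there is no paper-internal proof to compare against line by line. That said, the rank-induction half of your argument is essentially reproduced inside the paper as Lemma~\ref{lem:MAIS1}: for any acyclic set $L$ the paper orders its elements $l_1,\dots,l_{|L|}$ so that $L_j=\{l_{j+1},\dots,l_{|L|}\}\subseteq B_{l_j}$ and unrolls \eqref{eq:rem:dec-con-1} to get $\mathrm{rank}\,\boldsymbol{H}_L=|L|t$. You use the mirror ordering (prefixes rather than suffixes land in $B_{v_k}$), which is the same idea traversed in the opposite direction; both are valid, and your verification of $\{v_1,\dots,v_{k-1}\}\subseteq B_{v_k}$ from $A_{v_k}\cap V'\subseteq\{v_{k+1},\dots,v_s\}$ is exactly the ``bookkeeping'' you flag, and it does go through because $B_i=[m]\setminus(A_i\cup\{i\})$.

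Your two remaining pieces are also sound. The inequality $\lambda_{\mathcal{I},q}\ge\lambda_{\mathcal{I}}\ge\beta_{\mathcal{I}}$ is indeed just containment of code families plus taking the minimum over $q$ in \eqref{eq:opt-lin-rate}. The entropy argument for $\beta_{\mathcal{I}}\ge\beta_{\mathrm{MAIS}(\mathcal{I})}$ is the standard one and is correct; your abbreviated chain $r\log|\mathcal{X}|\ge H(Y)\ge H(X_{V'})$ suppresses the conditioning, but it is justified because $X_{V'}$ is independent of $X_{[m]\setminus V'}$ and is a function of $(Y,X_{[m]\setminus V'})$, giving $H(X_{V'})=I(X_{V'};Y\mid X_{[m]\setminus V'})\le H(Y)$. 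One small economy you could make: since the paper's chain already gives $\beta_{\mathcal{I}}\le\lambda_{\mathcal{I},q}$, proving only the information-theoretic bound $\beta_{\mathcal{I}}\ge\beta_{\mathrm{MAIS}(\mathcal{I})}$ would suffice for the whole proposition; the separate rank induction is not logically needed here, though it is worth having in hand because the paper reuses exactly that argument (as Lemma~\ref{lem:MAIS1}) when it goes on to lower-bound $\lambda_{\mathcal{I}_1,q}$ and $\lambda_{\mathcal{I}_2,q}$.
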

 
\begin{equation}
    \lambda_{\mathcal{I},q}\geq \lambda_{\mathcal{I}}\geq \beta_{\mathcal{I}}\geq \beta_{\mathrm{MAIS}(\mathcal{I})}.
\end{equation}

\section{Insufficiency of Linear Coding for the UIC Problem in Terms of Broadcast Rate} \label{sec:03}
This section gives a description of the two specific ways of connecting the two instances $\mathcal{I}_1$ and $\mathcal{I}_2$, which will be used in this paper for proving the insufficiency of linear coding for achieving the broadcast rate of the UIC problem.

\begin{defn}[$\mathcal{I}_1 \nleftrightarrow \mathcal{I}_2$: No-way Connection of $\mathcal{I}_1$ and $\mathcal{I}_2$]
Given two index coding instances $\mathcal{I}_1=\{A_{i}^{1}, i\in [m_1]\}$ and $\mathcal{I}_2=\{A_{i}^{2}, i\in [m_2]\}$, no-way connection of $\mathcal{I}_1$ and $\mathcal{I}_2$, denoted by $\mathcal{I}_1 \nleftrightarrow \mathcal{I}_2$, is defined as a new index coding instance $ \mathcal{I}=\{A_i, i\in[m]\}$, where $m=m_1+m_2$ and
\begin{equation} \nonumber
    \left\{
      \begin{array}{cc}
        A_i&=A_{i}^{1}, \ \ \ \forall i\in [m_1], \\ \\
        A_{i+m_1}&=A_{i}^{2}, \ \ \ \forall i\in [m_2]. 
     \end{array}
     \right.
\end{equation}
\end{defn}

\begin{defn}[$\mathcal{I}_1 \leftrightarrow \mathcal{I}_2$: Two-way Connection of $\mathcal{I}_1$ and $\mathcal{I}_2$]
Given two index coding instances $\mathcal{I}_1=\{A_{i}^{1}, i\in [m_1]\}$ and $\mathcal{I}_2=\{A_{i}^{2}, i\in [m_2]\}$, two-way connection of $\mathcal{I}_1$ and $\mathcal{I}_2$, denoted by $\mathcal{I}_1 \leftrightarrow \mathcal{I}_2$, is defined as a new index coding instance $ \mathcal{I}=\{A_i, i\in[m]\}$, where $m=m_1+m_2$ and
\begin{equation} \nonumber
    \left\{
      \begin{array}{cc}
        A_i&=A_{i}^{1} \cup ([m]\backslash[m_1]),\ \ \ \ \ \ \ \ \ \forall i\in [m_1], \\ \\
        A_{i+m_1}&=A_{i}^{2} \cup [m_1], \ \ \ \ \ \quad \quad \quad \quad \forall i\in [m_2],
     \end{array}
     \right.
\end{equation}
which means that the new instance $\mathcal{I}$ is a concatenation of the two subinstances $\mathcal{I}_1$ and $\mathcal{I}_2$ such that each user in $\mathcal{I}_1$ has all the messages requested by the users in $\mathcal{I}_2$ in its side information set and vice versa.
\end{defn}

\begin{prop}[\mdseries Blasiak \textit{et all}. \cite{Blasiak2011}] \label{prop:no-two}
Let $\lambda_{\mathcal{I}_1,q}$ and $\lambda_{\mathcal{I}_2,q}$, respectively, denote the linear broadcast rate of $\mathcal{I}_1$ and $\mathcal{I}_2$ over $\mathbb{F}_q$. Then, for the linear broadcast rate of $\mathcal{I}_3=\mathcal{I}_1 \nleftrightarrow \mathcal{I}_2$ and $\mathcal{I}_4=\mathcal{I}_1 \leftrightarrow \mathcal{I}_2$ over $\mathbb{F}_q$, we have
 \begin{equation} \nonumber
    \left\{
      \begin{array}{cc}
         \lambda_{\mathcal{I}_3,q}&=\lambda_{\mathcal{I}_1,q}+\lambda_{\mathcal{I}_2,q}, \ \ \ \ \ \ \ \ \ \\ \\
        \lambda_{\mathcal{I}_4,q}&=\max \{\lambda_{\mathcal{I}_1,q}, \lambda_{\mathcal{I}_2,q}\}.
     \end{array}
     \right.
\end{equation}
\end{prop}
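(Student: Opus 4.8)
The plan is to argue entirely from the rank decoding condition \eqref{eq:dec-cond}, together with two elementary facts over $\mathbb{F}_q$: the rank of a matrix is a submodular function of the chosen set of column blocks, and left-multiplication by a full-column-rank matrix preserves the rank of every submatrix of columns, so when counting rows we may always replace a $(t,r)$ encoder by one whose number of rows equals its rank. A minor bookkeeping point, used throughout, is that the infima defining $\lambda_{\mathcal{I},q}$ need not be attained and the optimal $t$ may differ across $\mathcal{I}_1,\mathcal{I}_2,\mathcal{I}_3,\mathcal{I}_4$; this is handled by replicating any $(t,r)$ code block-wise into a $(kt,kr)$ code of the same rate, so that a common message length can be assumed whenever codes are combined, and by arguing at each fixed $t$ when proving lower bounds.

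For the no-way connection $\mathcal{I}_3=\mathcal{I}_1 \nleftrightarrow \mathcal{I}_2$, the bound $\lambda_{\mathcal{I}_3,q}\le \lambda_{\mathcal{I}_1,q}+\lambda_{\mathcal{I}_2,q}$ follows by placing optimal encoders for $\mathcal{I}_1$ and $\mathcal{I}_2$ (at a common $t$) in complementary row blocks, i.e. block-diagonally: the side information sets are unchanged, the two row blocks are disjoint, so the rank difference in \eqref{eq:dec-cond} for every user reduces to the corresponding difference for its own subinstance. For the converse, take any $(t,r)$ encoder $\boldsymbol{H}$ solving $\mathcal{I}_3$ and let $\boldsymbol{H}^{(1)}$ be its restriction to the column blocks of $\mathcal{I}_1$. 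Since a user $i\in[m_1]$ has interference set $B_i^1\cup([m]\setminus[m_1])$ in $\mathcal{I}_3$, submodularity of rank applied to the chain $B_i^1\subseteq B_i^1\cup([m]\setminus[m_1])$ with the added block $\{i\}$ forces the rank increment to be exactly $t$, so $\boldsymbol{H}^{(1)}$ solves $\mathcal{I}_1$ and $d_1:=\mathrm{rank}\,\boldsymbol{H}^{(1)}\ge \lambda_{\mathcal{I}_1,q}t$. Now compose $\boldsymbol{H}$ with the quotient map $\mathbb{F}_q^{r}\to\mathbb{F}_q^{r}/W_1$, where $W_1$ is the column span of $\boldsymbol{H}^{(1)}$, and restrict to the column blocks of $\mathcal{I}_2$; because every $\mathcal{I}_2$-user has all of $[m_1]$ in its interference set in $\mathcal{I}_3$, both column spaces appearing in its instance of \eqref{eq:dec-cond} contain $W_1$, so each loses exactly $\dim W_1=d_1$ under the quotient and the increment $t$ is preserved, while the $\mathcal{I}_1$-columns become zero. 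Hence the quotiented matrix solves $\mathcal{I}_2$ and maps into a space of dimension $r-d_1$, giving $r-d_1\ge \lambda_{\mathcal{I}_2,q}t$; adding the two bounds yields $r\ge(\lambda_{\mathcal{I}_1,q}+\lambda_{\mathcal{I}_2,q})t$.

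For the two-way connection $\mathcal{I}_4=\mathcal{I}_1 \leftrightarrow \mathcal{I}_2$, the lower bound is immediate: here a user $i\in[m_1]$ has $A_i\supseteq[m]\setminus[m_1]$, so its interference set is exactly $B_i^1$, and symmetrically for $\mathcal{I}_2$-users; thus restricting any $(t,r)$ solution of $\mathcal{I}_4$ to the columns of either subinstance produces a solution of that subinstance with at most $r$ rows, whence $r/t\ge\lambda_{\mathcal{I}_1,q}$ and $r/t\ge\lambda_{\mathcal{I}_2,q}$, i.e. $r/t\ge\max\{\lambda_{\mathcal{I}_1,q},\lambda_{\mathcal{I}_2,q}\}$. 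For the matching upper bound, assume $\lambda_{\mathcal{I}_1,q}\ge\lambda_{\mathcal{I}_2,q}$, take optimal encoders at a common $t$, embed the one for $\mathcal{I}_2$ into the same number of rows as the one for $\mathcal{I}_1$ by appending all-zero rows, and place the two families of column blocks side by side; since each user's interference set lies within its own subinstance, the appended zero rows affect no relevant rank and \eqref{eq:dec-cond} again decouples, so this is a valid encoder for $\mathcal{I}_4$ with $\lambda_{\mathcal{I}_1,q}t=\max\{\lambda_{\mathcal{I}_1,q},\lambda_{\mathcal{I}_2,q}\}t$ rows.

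I expect the only genuine obstacle to be the lower bound for the no-way connection, specifically the quotient step: one must verify carefully that for every column set $L\supseteq[m_1]$ the rank of $\boldsymbol{H}_L$ drops by precisely $\dim W_1$ under the quotient by $W_1$ (which relies on $W_1\subseteq\mathrm{span}\,\boldsymbol{H}_L$), and then combine this with the vanishing of the $\mathcal{I}_1$-columns to recover \eqref{eq:dec-cond} for $\mathcal{I}_2$ with a reduced number of rows. Everything else is routine linear algebra and index re-labeling.
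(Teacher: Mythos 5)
Your proof is correct. It is worth noting up front that the paper itself does not prove Proposition~\ref{prop:no-two}: the result is cited from Blasiak \textit{et al.}\ \cite{Blasiak2011}, and the only trace of an argument in the source is a commented-out sketch phrased in the language of a minrank optimization over the off-diagonal blocks of the combined encoder matrix. So there is no "paper's proof" for you to mirror, and what you have supplied is genuinely additional material.

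Your route is also the more transparent one. Instead of invoking minrank, you argue directly from the rank decoding condition \eqref{eq:dec-cond}. The achievability directions (block-diagonal stacking for $\mathcal{I}_1 \nleftrightarrow \mathcal{I}_2$; zero-padded side-by-side placement for $\mathcal{I}_1 \leftrightarrow \mathcal{I}_2$) and the converse for the two-way connection (column restriction) are routine, as you say. The one substantive step is the converse for the no-way connection, and your treatment of it is sound: submodularity of rank turns the $\mathcal{I}_3$ decoding condition for an $\mathcal{I}_1$-user into the $\mathcal{I}_1$ decoding condition for the column restriction $\boldsymbol{H}^{(1)}$, giving $d_1 := \mathrm{rank}\,\boldsymbol{H}^{(1)} \ge \lambda_{\mathcal{I}_1,q}\,t$ after reducing to a full-row-rank representative; and quotienting by $W_1 = \mathrm{span}\,\boldsymbol{H}^{(1)}$ drops every rank term in an $\mathcal{I}_2$-user's condition by exactly $d_1$ (because $[m_1]$ lies in both index sets, so both column spans contain $W_1$) while annihilating the $[m_1]$-columns, so the increment of $t$ survives and the quotient is a valid $\mathcal{I}_2$ encoder into an $(r-d_1)$-dimensional space. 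Adding $d_1 \ge \lambda_{\mathcal{I}_1,q}\,t$ and $r-d_1 \ge \lambda_{\mathcal{I}_2,q}\,t$ closes the argument. Your bookkeeping remarks about matching $t$ across subinstances by replication, and about the infima in the definition of $\lambda_{\mathcal{I},q}$ possibly not being attained, are exactly the right cautions. The one thing I would ask you to spell out if this were included is the observation that $\boldsymbol{Q}\boldsymbol{H}_{[m_1]} = \boldsymbol{0}$ plus $W_1 \subseteq \mathrm{span}\,\boldsymbol{H}_L$ for $L \supseteq [m_1]$ jointly imply $\mathrm{rank}\,\boldsymbol{Q}\boldsymbol{H}_{L} = \mathrm{rank}\,\boldsymbol{H}_{L} - d_1 = \mathrm{rank}\,\boldsymbol{Q}\boldsymbol{H}_{L\setminus[m_1]}$, which is the link between the $\mathcal{I}_3$ condition and the $\mathcal{I}_2$ condition on the quotient; you state the conclusion but the two equalities should appear explicitly.
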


\begin{thm} \label{thm:main}
Linear coding is insufficient for achieving the broadcast rate of the UIC problems.
\end{thm}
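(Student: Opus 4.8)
The plan is to realize the required UIC instance as a two-way connection $\mathcal{I}=\mathcal{I}_1\leftrightarrow\mathcal{I}_2$ of two subinstances that between them use $36$ users, and then to propagate a field-characteristic obstruction living on each subinstance up to $\mathcal{I}$. Concretely, $\mathcal{I}_1$ (constructed in Section~\ref{sec:04}) is designed so that its broadcast rate $\beta_{\mathcal{I}_1}$ is attained by a scalar binary \emph{linear} code, while $\lambda_{\mathcal{I}_1,q}>\beta_{\mathcal{I}_1}$ for \emph{every} finite field $\mathbb{F}_q$ of odd characteristic; dually, $\mathcal{I}_2$ (constructed in Section~\ref{sec:05}) is designed so that $\beta_{\mathcal{I}_2}=\beta_{\mathcal{I}_1}$, attained by a scalar binary \emph{nonlinear} code, while $\lambda_{\mathcal{I}_2,q}>\beta_{\mathcal{I}_2}$ for every $\mathbb{F}_q$ of characteristic two (and in both cases the strict gap stays bounded away from $0$ uniformly in $q$). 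The side-information sets of the two subinstances are engineered so that any optimal linear code for $\mathcal{I}_1$ (resp.\ $\mathcal{I}_2$) is forced to satisfy the same algebraic identities that any network code solving the odd-characteristic-obstructed (resp.\ characteristic-two-obstructed) sub-network of the Dougherty \textit{et al.}~\cite{Dougherty2005} counterexample must satisfy. Carrying out this reduction, and proving the resulting characteristic dichotomy, is the technical core of the paper and is where essentially all of the difficulty lies.

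Assume these two sub-results. First I would pin down $\beta_{\mathcal{I}}$. For each $k\in\{1,2\}$, freezing all messages of the other subinstance to fixed constants converts any $(t,r)$ index code for $\mathcal{I}$ into a $(t,r)$ index code for $\mathcal{I}_k$ (the decoders of the $\mathcal{I}_k$-users already hold those frozen messages in their side information), so $\beta_{\mathcal{I}}\ge\beta_{\mathcal{I}_k}$. Conversely, in a two-way connection every user in one subinstance holds all messages of the other, so if we transmit the symbol-wise sum (over the alphabet group) of an optimal code for $\mathcal{I}_1$ and an optimal code for $\mathcal{I}_2$, padded to a common length, then each user cancels the codeword of the subinstance it does not belong to and decodes from the residual; hence $\beta_{\mathcal{I}}\le\max\{\beta_{\mathcal{I}_1},\beta_{\mathcal{I}_2}\}$. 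Therefore $\beta_{\mathcal{I}}=\max\{\beta_{\mathcal{I}_1},\beta_{\mathcal{I}_2}\}=\beta_{\mathcal{I}_1}$ (this equality is also the broadcast-rate counterpart of Proposition~\ref{prop:no-two}, due to \cite{Blasiak2011}), and superposing the scalar binary linear code of $\mathcal{I}_1$ on the scalar binary nonlinear code of $\mathcal{I}_2$ yields an explicit scalar binary \emph{nonlinear} index code for $\mathcal{I}$ attaining $\beta_{\mathcal{I}}$.

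It remains to show that no linear code attains $\beta_{\mathcal{I}}$. By Proposition~\ref{prop:no-two}, $\lambda_{\mathcal{I},q}=\max\{\lambda_{\mathcal{I}_1,q},\lambda_{\mathcal{I}_2,q}\}$ for every prime power $q$. If $\mathrm{char}(\mathbb{F}_q)$ is odd, then $\lambda_{\mathcal{I},q}\ge\lambda_{\mathcal{I}_1,q}>\beta_{\mathcal{I}_1}=\beta_{\mathcal{I}}$; if $\mathrm{char}(\mathbb{F}_q)=2$, then $\lambda_{\mathcal{I},q}\ge\lambda_{\mathcal{I}_2,q}>\beta_{\mathcal{I}_2}=\beta_{\mathcal{I}}$. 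Since every finite field falls into exactly one of these two cases, $\lambda_{\mathcal{I},q}>\beta_{\mathcal{I}}$ for \emph{all} $q$, and because the two gaps are uniformly bounded below by a positive constant, $\lambda_{\mathcal{I}}=\min_q\lambda_{\mathcal{I},q}>\beta_{\mathcal{I}}$. Combined with the nonlinear code of the previous paragraph, this proves that linear coding is insufficient for achieving the broadcast rate, on an instance with only $36$ users. The hard part is not this assembly but Sections~\ref{sec:04}--\ref{sec:05}: one must simultaneously force the optimal-code constraints of each subinstance to mirror the network-coding constraints of \cite{Dougherty2005}, rule out all \emph{vector} linear codes over every field of the offending characteristic (not merely scalar ones), exhibit and verify a small optimal code on the non-offending side, and keep the combined user count at $36$.
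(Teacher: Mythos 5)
Your argument is correct and follows the paper's own route: the same pair of subinstances $\mathcal{I}_1$, $\mathcal{I}_2$ with the same characteristic dichotomy (Theorems~\ref{thm:01} and~\ref{thm:02}), Proposition~\ref{prop:no-two} applied to the two-way connection, and superposition of the optimal scalar binary linear code for $\mathcal{I}_1$ with the optimal scalar binary nonlinear code for $\mathcal{I}_2$ as the witnessing nonlinear code. You treat only $\mathcal{I}_1\leftrightarrow\mathcal{I}_2$ whereas the paper also records the no-way connection $\mathcal{I}_1\nleftrightarrow\mathcal{I}_2$, and you make the converse $\beta_{\mathcal{I}}\ge 6$ explicit via a freezing argument, but these are presentational rather than substantive differences.
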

\begin{proof}
In the next two sections, two UIC instances $\mathcal{I}_1$ and $\mathcal{I}_2$, respectively, with $m_1=10$ and $m_2=26$ messages will be characterized with the following properties:
\begin{itemize}[leftmargin=*]
    \item In Theorem \ref{thm:01}, first, we show that $\lambda_{\mathcal{I}_1,2}=\beta_{\mathcal{I}_1}=6$ by designing a scalar binary linear code $\mathcal{L}_{\mathcal{I}_1}=(\boldsymbol{H}_{\mathcal{I}_1,2}^{\ast}, \{\psi_{\mathcal{I}_1}^{i}\})$. This implies the optimality of the binary linear code. Then, we prove that $\lambda_{\mathcal{I}_1,q}>6, \forall q=2k+1, k\geq 1$, which means that linear coding over any finite field with odd characteristic is not optimal.
    \item In Theorem \ref{thm:02}, first, we prove that $\lambda_{\mathcal{I}_2,q}>\beta_{\mathcal{I}_2}=6, \forall q=2k, k\geq 1$, which means that linear coding over any finite field with characteristic two is not optimal. Then, we show that there exists a scalar binary nonlinear code $\mathcal{C}_{\mathcal{I}_2}=(\phi_{\mathcal{I}_2}, \{\psi_{\mathcal{I}_2}^{i}\})$, which is optimal, i.e., $\beta(\mathcal{C}_{\mathcal{I}_2})=6$.
\end{itemize}
This implies that, if the finite field $\mathbb{F}_q$ has either characteristic two or odd characteristic, then one of $\lambda_{\mathcal{I}_1,q}$ and $\lambda_{\mathcal{I}_2,q}$ will always be greater than 6.
Hence, according to \eqref{eq:opt-lin-rate} and Proposition \ref{prop:no-two}, we have
\begin{equation}
     \left\{
      \begin{array}{cc}
        \lambda_{\mathcal{I}_3}&=\min_{q}  (\lambda_{\mathcal{I}_1,q} + \lambda_{\mathcal{I}_2,q})>12, \ \ \ \ \ \ \ \\ \\
        \lambda_{\mathcal{I}_4}&=\min_{q}  \max \{\lambda_{\mathcal{I}_1,q}, \lambda_{\mathcal{I}_2,q}\}>6.
     \end{array}
     \right.
\end{equation}
Now, let $\boldsymbol{y}=[y_1,\dots,y_{6}]^T= \boldsymbol{H}_{\mathcal{I}_1,2}^{\ast}[x_1,\dots,x_{10}]^T$ and $\{z_1,\dots,z_{6}\}=\phi_{\mathcal{I}_2}(x_{[36]\backslash[10]})$. Then, we have $\beta(\mathcal{C}_{\mathcal{I}_3})=12$ and $\beta(\mathcal{C}_{\mathcal{I}_4})=6$ for two scalar nonlinear codes $\mathcal{C}_{\mathcal{I}_3}$ and $\mathcal{C}_{\mathcal{I}_4}$ such that
\begin{equation}
     \left\{
      \begin{array}{cc}
        \phi_{\mathcal{I}_3}(x_{[36]})=\{y_1,\dots,y_6\}\cup\{z_1,\dots,z_6\},
        \\
        \\
        \phi_{\mathcal{I}_4}(x_{[36]})=\{y_1\oplus z_1,\dots, y_6\oplus z_6\}, \ \ \ \ \ \
     \end{array}
     \right.
\end{equation}
which completes the proof.
\end{proof}

\subsection{Prerequisite Material for Proof of Theorems \ref{thm:01} and \ref{thm:02}}

\begin{rem} \label{rem:dec-con}
It can be easily verified that the decoding condition in \eqref{eq:dec-cond} along with the properties of the $rank$ function give the following results.
\begin{align}
    \mathrm{rank}\ \boldsymbol{H}_{\{i\}\cup B_{i}^{\prime}}&= \mathrm{rank}\ \boldsymbol{H}_{B_{i}^{\prime}} + t, \ \ \ \forall B_{i}^{\prime}\subseteq B_{i}, \ \forall i\in [m],  \label{eq:rem:dec-con-1} 
    \\
    \mathrm{rank}\ \boldsymbol{H}_{i}&= t, \ \ \ \ \ \ \ \ \ \ \ \ \ \ \ \ \ \ \ \forall i\in [m], \label{eq:rem:dec-con-2}
    \\
    \mathrm{rank}\ \boldsymbol{H}_{L_1}&\leq \mathrm{rank}\ \boldsymbol{H}_{L_2}, \ \ \ \ \ \ \ \ \forall L_1 \subseteq L_2 \subseteq [m]. \label{eq:rem:dec-con-3}
\end{align}
\end{rem}

\begin{lem} \label{lem:MAIS1}
Assume $L\subseteq [m]$ is an acyclic set of $\mathcal{I}$. Then, the condition in \eqref{eq:rem:dec-con-1} for all $i\in L$ requires $\mathrm{rank} \ \boldsymbol{H}_L=|L|t$.
\end{lem}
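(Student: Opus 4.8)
The plan is to exploit the topological ordering that any acyclic induced subgraph admits, and then run a one-vertex-at-a-time induction driven by the rank identity \eqref{eq:rem:dec-con-1}.

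First, since $L$ is an acyclic set of $\mathcal{I}$, the induced subgraph of $\mathcal{G}_{\mathcal{I}}$ on $L$ is a directed acyclic graph, so there is an ordering $L=\{l_1,\dots,l_{|L|}\}$ of its vertices such that every edge points forward, i.e. $(l_a,l_b)\in E$ implies $a<b$. Recalling that $(i,j)\in E$ iff $i\in A_j$, this says that $l_a\in A_{l_b}$ forces $a<b$; equivalently, if $a>b$ then $l_a\notin A_{l_b}$, and since $l_a\neq l_b$ this gives $l_a\in B_{l_b}$. Hence, writing $L_k:=\{l_{k+1},\dots,l_{|L|}\}$, we obtain the nesting $L_k\subseteq B_{l_k}$ for every $k\in[|L|]$ (with $L_{|L|}=\emptyset$).

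Next I would induct downward on $k$ to prove $\mathrm{rank}\ \boldsymbol{H}_{\{l_k\}\cup L_k}=(|L|-k+1)t$. The base case $k=|L|$ reads $\mathrm{rank}\ \boldsymbol{H}_{l_{|L|}}=t$, which is exactly \eqref{eq:rem:dec-con-2}. For the inductive step, apply \eqref{eq:rem:dec-con-1} with $i=l_k$ and $B_i'=L_k\subseteq B_{l_k}$: this yields $\mathrm{rank}\ \boldsymbol{H}_{\{l_k\}\cup L_k}=\mathrm{rank}\ \boldsymbol{H}_{L_k}+t$. Since $\boldsymbol{H}_{L_k}=\boldsymbol{H}_{\{l_{k+1}\}\cup L_{k+1}}$ has rank $(|L|-k)t$ by the induction hypothesis, the claim follows. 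Taking $k=1$ gives precisely $\mathrm{rank}\ \boldsymbol{H}_L=|L|t$.

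The only real subtlety — and what I expect to be the conceptual core rather than a computational hurdle — is the translation step: that acyclicity of the induced subgraph is equivalent to the existence of an ordering whose every suffix $L_k$ lies inside the local interfering message set $B_{l_k}$. Once that is in place, the induction is mechanical and uses only \eqref{eq:rem:dec-con-1} and \eqref{eq:rem:dec-con-2}. Note that no separate upper-bound argument is required: the identity in \eqref{eq:rem:dec-con-1} is an equality, so the induction delivers $\mathrm{rank}\ \boldsymbol{H}_L=|L|t$ directly (and of course $\mathrm{rank}\ \boldsymbol{H}_L\le |L|t$ is immediate since $\boldsymbol{H}_L$ has $|L|t$ columns).
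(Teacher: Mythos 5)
Your proof is correct and takes essentially the same route as the paper: both pick a topological ordering $l_1,\dots,l_{|L|}$ of the acyclic induced subgraph so that each suffix $L_j$ lies in $B_{l_j}$, then iterate \eqref{eq:rem:dec-con-1} to peel off $t$ at each step. The only difference is cosmetic — you phrase the chain of equalities as a formal downward induction and spell out why the topological order yields $L_k\subseteq B_{l_k}$, whereas the paper asserts the ordering's existence and unrolls the recursion in one display.
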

\begin{proof}
If $L$ is an acyclic set, then we can find a sequence of its elements $l_1,\dots,l_{|L|}\in L$ such that $L_j\subseteq B_{l_j}, \forall j\in [|L|]$, where $L_j=\{l_{j+1},\dots,l_{|L|}\}, \forall j\in [|L|-1]$ and $L_{|L|}=\emptyset$. Note $L=\{l_1\}\cup L_1$ and $L_j=\{l_{j+1}\}\cup L_{j+1}, \forall j\in [|L|-1]$. By applying the condition in \eqref{eq:rem:dec-con-1} for each $i=l_1,\dots,l_{|L|}$, we have
\begin{align}
    \mathrm{rank}\ \boldsymbol{H}_{L=\{l_1\}\cup L_1}&= \mathrm{rank}\ \boldsymbol{H}_{L_1=\{l_2\}\cup L_2}+t \nonumber\\
    &= \mathrm{rank}\ \boldsymbol{H}_{L_2=\{l_3\}\cup L_3}+2t \nonumber \\
    &=\dots \nonumber \\
    &= |L|t \nonumber.
\end{align}
\vspace{-2ex}
\end{proof}

\begin{lem} \label{lem:min-cyc}
Let $L\subseteq[m]$ be a minimal cyclic set of $\mathcal{I}$ with encoder matrix $\boldsymbol{H}$. In order to have $\mathrm{rank}\ \boldsymbol{H}_L=(|L|-1)t$, we must have $\boldsymbol{H}_{l}=\sum_{j\in L\backslash\{l\}} \boldsymbol{H}_{j} \boldsymbol{M}_{l,j},\forall l\in L$ such that each $\boldsymbol{M}_{l,j}\in \boldsymbol{F}_{q}^{t\times t}$ is invertible.
\end{lem}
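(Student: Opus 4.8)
The plan is to leverage two facts: every proper induced subgraph of a minimal cyclic set is acyclic (by definition), so Lemma~\ref{lem:MAIS1} pins down its rank exactly; and the hypothesis says $\boldsymbol{H}_L$ is deficient by exactly $t$ columns. Write $k=|L|$. First I would record that for every $l\in L$ the set $L\setminus\{l\}$ is acyclic, so Lemma~\ref{lem:MAIS1} gives $\mathrm{rank}\ \boldsymbol{H}_{L\setminus\{l\}}=(k-1)t$; since this submatrix has exactly $(k-1)t$ columns, it has \emph{full column rank} (its columns are linearly independent). More generally, $\mathrm{rank}\ \boldsymbol{H}_{L'}=|L'|t$ for every $L'\subsetneq L$.

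Next I would establish the representation. Viewing $\boldsymbol{H}_L$ as the concatenation of the block $\boldsymbol{H}_l$ with $\boldsymbol{H}_{L\setminus\{l\}}$, monotonicity \eqref{eq:rem:dec-con-3} together with $\mathrm{rank}\ \boldsymbol{H}_L=(k-1)t=\mathrm{rank}\ \boldsymbol{H}_{L\setminus\{l\}}$ forces the two matrices to have the same column space, hence the columns of $\boldsymbol{H}_l$ lie in the column space of $\boldsymbol{H}_{L\setminus\{l\}}$. Partitioning the corresponding coefficient matrix into $t\times t$ blocks yields matrices $\boldsymbol{M}_{l,j}\in\mathbb{F}_q^{t\times t}$, $j\in L\setminus\{l\}$, with $\boldsymbol{H}_l=\sum_{j\in L\setminus\{l\}}\boldsymbol{H}_j\boldsymbol{M}_{l,j}$ for every $l\in L$.

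The core step is showing each $\boldsymbol{M}_{l,j}$ is invertible, and I would do this by contradiction. Suppose $\boldsymbol{M}_{l,j^\ast}$ is singular for some $j^\ast\in L\setminus\{l\}$, and choose $\boldsymbol{v}\neq\boldsymbol{0}$ with $\boldsymbol{M}_{l,j^\ast}\boldsymbol{v}=\boldsymbol{0}$. Right-multiplying the identity above by $\boldsymbol{v}$ gives $\boldsymbol{H}_l\boldsymbol{v}=\sum_{j\in L\setminus\{l,j^\ast\}}\boldsymbol{H}_j(\boldsymbol{M}_{l,j}\boldsymbol{v})$. This is a linear relation among the columns of $\boldsymbol{H}_{L\setminus\{j^\ast\}}$ (note $l\neq j^\ast$, so the block $\boldsymbol{H}_l$ really is present) in which the coefficient vector attached to the block $\boldsymbol{H}_l$ is $\boldsymbol{v}\neq\boldsymbol{0}$; hence it is a \emph{nontrivial} relation. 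But $L\setminus\{j^\ast\}$ is acyclic, so by the first step $\boldsymbol{H}_{L\setminus\{j^\ast\}}$ has full column rank, a contradiction. Therefore every $\boldsymbol{M}_{l,j}$ is invertible, which is precisely the assertion.

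The only delicate point is the invertibility argument; the rest is bookkeeping with ranks via Lemma~\ref{lem:MAIS1} and \eqref{eq:rem:dec-con-3}. The idea that makes the invertibility work is to delete exactly the index $j^\ast$ whose coefficient block is singular, so the resulting would-be dependency is confined to a proper — hence acyclic, hence full-column-rank — subfamily of blocks, where no nontrivial dependency can exist.
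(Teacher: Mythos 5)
Your proof is correct and takes essentially the same approach as the paper: write $\boldsymbol{H}_l$ in the span of $\boldsymbol{H}_{L\setminus\{l\}}$ (forced by the rank equalities), then show a singular $\boldsymbol{M}_{l,j^\ast}$ would create a nontrivial column dependency in $\boldsymbol{H}_{L\setminus\{j^\ast\}}$, contradicting the acyclicity-driven full column rank from Lemma~\ref{lem:MAIS1}. The paper states this more tersely (asserting directly that $\mathrm{rank}\,\boldsymbol{H}_{L\setminus\{j\}}<(|L|-1)t$ when $\boldsymbol{M}_{l,j}$ is singular); your version simply spells out the kernel-vector reasoning behind that drop.
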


\begin{proof}
First, note that for any $l\in L$, set $L\backslash\{l\}$ is an acyclic set. Then, according to Lemma \ref{lem:MAIS1}, 
\begin{equation} \label{eq:lem3}
    \mathrm{rank}\ \boldsymbol{H}_{L\backslash\{l\}}=(|L|-1)t, \ \ \forall l\in L.
\end{equation}
So, having $\mathrm{rank}\ \boldsymbol{H}_L=(|L|-1)t$ requires $\boldsymbol{H}_{l}=\sum_{j\in L\backslash\{l\}} \boldsymbol{H}_{j} \boldsymbol{M}_{l,j}$. Now, if one of the $\boldsymbol{M}_{l,j}, j\in L\backslash \{l\}$ is not invertible, then $\mathrm{rank}\ \boldsymbol{H}_{L\backslash\{j\}}<(|L|-1)t$, which contradicts \eqref{eq:lem3}. Thus, all $\boldsymbol{M}_{l,j}$ must be invertible.
\end{proof}

\begin{lem} \label{lem:MAIS2}
Assume $L\subseteq[m]$ is an independent set of $\mathcal{I}$ and let $ j\in [m]\backslash L$. Now, if $j\in B_i,\forall i\in L\backslash \{l\}$ for some $l\in L$, then in order to have $\mathrm{rank}\ \boldsymbol{H}_{\{j\}\cup L}=|L|t$, one must satisfy $\boldsymbol{H}_j=\boldsymbol{H}_l\boldsymbol{M}_{j,l}$ for some invertible matrix $\boldsymbol{M}_{j,l}$.
\end{lem}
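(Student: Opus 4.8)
The plan is to follow the template of the proofs of Lemmas \ref{lem:MAIS1} and \ref{lem:min-cyc}: convert the rank hypothesis $\mathrm{rank}\ \boldsymbol{H}_{\{j\}\cup L}=|L|t$ into column-span containments through the decoding condition \eqref{eq:rem:dec-con-1}, and then use uniqueness of coordinates with respect to a linearly independent family of columns to annihilate every block of $\boldsymbol{H}_j$ except the one attached to $\boldsymbol{H}_l$. First I would record the elementary rank facts: since $L$ is an independent set it is acyclic, and so is $L\setminus\{i\}$ for each $i\in L$, so Lemma \ref{lem:MAIS1} gives $\mathrm{rank}\ \boldsymbol{H}_L=|L|t$ and $\mathrm{rank}\ \boldsymbol{H}_{L\setminus\{i\}}=(|L|-1)t$ for all $i\in L$. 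In particular the $|L|t$ columns of $\boldsymbol{H}_L=\big[\,\boldsymbol{H}_{l_1}\,|\cdots|\,\boldsymbol{H}_{l_{|L|}}\,\big]$ are linearly independent, so every vector in their span has a unique expansion $\sum_{i\in L}\boldsymbol{H}_i\boldsymbol{N}_i$ with $\boldsymbol{N}_i\in\mathbb{F}_q^{t\times t}$.

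Next, I would fix an arbitrary $i\in L\setminus\{l\}$ and extract a span containment for $\boldsymbol{H}_j$. Independence of $L$ gives $L\setminus\{i\}\subseteq B_i$, and by hypothesis $j\in B_i$; since $j\notin L$, the set $B_i'=(L\setminus\{i\})\cup\{j\}$ is a subset of $B_i$ of size $|L|$. Applying \eqref{eq:rem:dec-con-1} to this $B_i'$ and using $\{i\}\cup B_i'=L\cup\{j\}$ gives $\mathrm{rank}\ \boldsymbol{H}_{L\cup\{j\}}=\mathrm{rank}\ \boldsymbol{H}_{(L\setminus\{i\})\cup\{j\}}+t$; combined with the hypothesis $\mathrm{rank}\ \boldsymbol{H}_{\{j\}\cup L}=|L|t$ this forces $\mathrm{rank}\ \boldsymbol{H}_{(L\setminus\{i\})\cup\{j\}}=(|L|-1)t=\mathrm{rank}\ \boldsymbol{H}_{L\setminus\{i\}}$. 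Hence adjoining the $t$ columns of $\boldsymbol{H}_j$ to $\boldsymbol{H}_{L\setminus\{i\}}$ does not raise the rank, i.e. every column of $\boldsymbol{H}_j$ lies in the column span of $\boldsymbol{H}_{L\setminus\{i\}}$.

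Finally, I would combine these containments. Writing $\boldsymbol{H}_j=\sum_{i'\in L}\boldsymbol{H}_{i'}\boldsymbol{N}_{i'}$ in the unique way permitted by the linear independence of the columns of $\boldsymbol{H}_L$, the containment just proved expresses $\boldsymbol{H}_j$ using only the columns of $\boldsymbol{H}_{L\setminus\{i\}}$, so uniqueness forces $\boldsymbol{N}_i=\boldsymbol{0}$. Letting $i$ range over all of $L\setminus\{l\}$ yields $\boldsymbol{N}_{i'}=\boldsymbol{0}$ for every $i'\ne l$, whence $\boldsymbol{H}_j=\boldsymbol{H}_l\boldsymbol{N}_l$; set $\boldsymbol{M}_{j,l}:=\boldsymbol{N}_l$. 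Invertibility follows as in Lemma \ref{lem:min-cyc}: by \eqref{eq:rem:dec-con-2}, $t=\mathrm{rank}\ \boldsymbol{H}_j=\mathrm{rank}\,\boldsymbol{H}_l\boldsymbol{M}_{j,l}\le\mathrm{rank}\,\boldsymbol{M}_{j,l}\le t$, so $\boldsymbol{M}_{j,l}$ is invertible. The only point requiring care is the bookkeeping of the second and third steps — ensuring the per-$i$ containments really do kill every block other than the $l$-th — together with the degenerate case $|L|=1$, where $L\setminus\{l\}=\emptyset$, the hypothesis is vacuous, and $\mathrm{rank}\ \boldsymbol{H}_{\{j,l\}}=t=\mathrm{rank}\ \boldsymbol{H}_l$ directly yields $\boldsymbol{H}_j=\boldsymbol{H}_l\boldsymbol{M}_{j,l}$.
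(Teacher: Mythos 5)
Your proof is correct, and it rests on the same two ingredients as the paper's own argument: (i) acyclicity of $L$ and of $L\setminus\{i\}$, via Lemma~\ref{lem:MAIS1}, and (ii) the decoding condition \eqref{eq:rem:dec-con-1} applied for user $i$ with the particular subset $B_i'=(L\setminus\{i\})\cup\{j\}$ of $B_i$. The difference is purely in the organization. The paper argues by contradiction: it first expands $\boldsymbol{H}_j=\sum_{l'\in L}\boldsymbol{H}_{l'}\boldsymbol{M}_{j,l'}$, assumes some $\boldsymbol{M}_{j,i}\neq 0$ with $i\in L\setminus\{l\}$, and then chains \eqref{eq:rem:dec-con-1}, column removal, and the full-rank property of $\boldsymbol{H}_i$ to force $\mathrm{rank}\,\boldsymbol{H}_{\{j\}\cup L}>|L|t$. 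You instead derive the conclusion directly: the same decoding-condition instance, combined with the hypothesis $\mathrm{rank}\,\boldsymbol{H}_{\{j\}\cup L}=|L|t$, pins down $\mathrm{rank}\,\boldsymbol{H}_{(L\setminus\{i\})\cup\{j\}}=(|L|-1)t$, hence the columns of $\boldsymbol{H}_j$ lie in the column span of $\boldsymbol{H}_{L\setminus\{i\}}$ for every $i\in L\setminus\{l\}$; uniqueness of coordinates relative to the $|L|t$ linearly independent columns of $\boldsymbol{H}_L$ then kills every block but the $l$-th. The two routes are logically equivalent and of comparable length; your direct version is a touch cleaner because it avoids the auxiliary rank chain with the inequality step (the paper's \eqref{lem:pr:02}--\eqref{lem:pr:04}) and it explicitly addresses the degenerate case $|L|=1$, which the paper leaves implicit. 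Your invertibility argument ($t=\mathrm{rank}\,\boldsymbol{H}_j=\mathrm{rank}\,\boldsymbol{H}_l\boldsymbol{M}_{j,l}\le\mathrm{rank}\,\boldsymbol{M}_{j,l}\le t$) matches the paper's one-line justification.
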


\begin{proof}
First, because $L$ is an independent set, then $L\backslash \{i\}\subseteq B_i, \forall i\in L$. Moreover since $L$ is an acyclic set, then $\mathrm{rank}\ \boldsymbol{H}_{L}=|L|t$. So, in order to have $\mathrm{rank}\ \boldsymbol{H}_{\{j\}\cup L}=|L|t$, we must have $\boldsymbol{H}_j= \sum_{l\in L} \boldsymbol{H}_l\boldsymbol{M}_{j,l}$. Let $ j\in [m]\backslash L$ and $j\in B_i$ for some $i \in  L\backslash \{l\}$ which leads to $\{j\}\cup L\backslash \{i\}\subseteq B_i$. Now, assume $\boldsymbol{M}_{j,i}$ is a nonzero matrix (so, $\mathrm{rank}\ \boldsymbol{M}_{j,i}\geq 1$). Then, 
\begin{align}
    \mathrm{rank}\ \boldsymbol{H}_{\{j\}\cup L}&=
    \mathrm{rank}\ \boldsymbol{H}_{\{i\}\cup (\{j\}\cup  L\backslash\{i\})}
    \nonumber
    \\
    &=\mathrm{rank}\ \boldsymbol{H}_{\{j\}\cup L\backslash \{i\}}+t \label{lem:pr:01} 
    \\
    &=\mathrm{rank}\ \left [\begin{array}{c|c}
        \boldsymbol{H}_j & \boldsymbol{H}_{L\backslash \{i\}}
      \end{array}
     \right ]+t
     \nonumber
    \\
    &=\mathrm{rank}\ \left [\begin{array}{c|c}
        \sum_{l\in L} \boldsymbol{H}_l\boldsymbol{M}_{j,l} & \boldsymbol{H}_{L\backslash \{i\}}
      \end{array}
     \right ]+t
     \nonumber
    \\
    &\geq \mathrm{rank}\ \left [\begin{array}{c|c}
        \boldsymbol{H}_i\boldsymbol{M}_{j,i} & \boldsymbol{H}_{L\backslash \{i\}}
      \end{array}
     \right ]+t 
     \label{lem:pr:02} 
    \\
    &=\mathrm{rank}\ \boldsymbol{H}_{i}\boldsymbol{M}_{j,i}+ (|L|-1)t+t
    \label{lem:pr:03}
    \\
    &> |L|t, 
    \label{lem:pr:04}
\end{align}
where \eqref{lem:pr:01} is due to \eqref{eq:rem:dec-con-1}, \eqref{lem:pr:02} is because of the property of the $\mathrm{rank}$ function by removing the term $\sum_{l\in L\backslash \{i\}} \boldsymbol{H}_l\boldsymbol{M}_{j,l}$ from $\sum_{l\in L} \boldsymbol{H}_l\boldsymbol{M}_{j,l}$ as it is a linear combination of the columns of $\boldsymbol{H}_{L\backslash \{i\}}$. \eqref{lem:pr:03} is based on Lemma \ref{lem:MAIS1} and the fact that $L$ is an acyclic set. Thus, the column space of $\boldsymbol{H}_i$ is linearly independent of column space of $\boldsymbol{H}_{L\backslash \{i\}}$. Finally, \eqref{lem:pr:04} is due to the fact that $\boldsymbol{H}_i$ is full-rank and $\mathrm{rank}\ \boldsymbol{M}_{j,i}\geq 1$. The result in \eqref{lem:pr:04} contradicts the assumption that $\mathrm{rank}\ \boldsymbol{H}_{\{j\}\cup L}=|L|t$, and hence, we must have $\boldsymbol{M}_{j,i}=0$. The same argument for $i\in L\backslash \{l\}$ gives $\boldsymbol{M}_{j,i}=0, \forall i\in L\backslash \{l\}$. So, $\boldsymbol{H}_j=\boldsymbol{H}_l \boldsymbol{M}_{j,l}$ and $\boldsymbol{M}_{j,l}$ must be invertible to have $\mathrm{rank}\ \boldsymbol{H}_j=t$.
\end{proof}

\section{The Index Coding Instance $\mathcal{I}_1$} \label{sec:04}
In this section, we provide the index coding instance $\mathcal{I}_1$, where its broadcast rate is $\beta_{\mathcal{I}_1}=6$. First, we provide a scalar binary linear code which is optimal. Then, we prove that this rate is not achievable by linear coding over any finite field with odd characteristic.

The index coding instance $\mathcal{I}_1=\{A_i, i\in[10]\}$ is characterized as follows
\begin{equation}
  \left \{
    \begin{array}{ccc}
     A_1&=\{4,6,7\}, \ A_2=\{1,5,6\}, \ A_3=\{1,2,7\},  
     \\
     A_4&=\{2,3,6\}, \ A_5=\{1,3,4\}, \ A_6=\{3,5,7\},  
     \\
     A_7&=\{2,4,5\}, \ A_8=A_9=A_{10}=\emptyset. \ \ \ \ \ \ \ \ \ \ \ \ \
    \end{array}
  \right.
\end{equation}

\begin{thm} \label{thm:01}
  For the index coding instance $\mathcal{I}_1$, $\lambda_{\mathcal{I}_1,2}=\beta_{\mathcal{I}_1}=6$. However,  $\lambda_{\mathcal{I}_1,q}=\beta_{\mathcal{I}_1}>6, \forall q=2k+1, k\geq 1$.
\end{thm}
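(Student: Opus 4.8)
The plan is to establish the two halves of the statement separately. For the achievability half, $\lambda_{\mathcal{I}_1,2}=6$, I would first note that the MAIS bound already forces $\beta_{\mathcal{I}_1}\ge 6$: the three users $u_8,u_9,u_{10}$ with empty side information are an independent set of size $3$, and one must check (by inspecting the directed graph on $\{1,\dots,7\}$ induced by the $A_i$) that there is an acyclic induced subgraph of size $3$ among $u_1,\dots,u_7$ that extends this to an acyclic set of size $6$; hence $\beta_{\mathrm{MAIS}(\mathcal{I}_1)}=6$ and, by Proposition (Bar-Yossef \emph{et al.}), $\beta_{\mathcal{I}_1}\ge 6$. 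Then I would exhibit an explicit scalar binary encoder matrix $\boldsymbol{H}_{\mathcal{I}_1,2}^{\ast}\in\mathbb{F}_2^{6\times 10}$ whose last three columns are the standard basis vectors (serving $u_8,u_9,u_{10}$ directly) and whose first seven columns are chosen so that the decoding condition \eqref{eq:dec-cond} — equivalently, for each $i\in[7]$, $\mathrm{rank}\,\boldsymbol{H}_{\{i\}\cup B_i}=\mathrm{rank}\,\boldsymbol{H}_{B_i}+1$ — holds for all seven cyclically-structured users. This is a finite check over $\mathbb{F}_2$; the structure of the $A_i$ (each of size $3$, so each $B_i$ has size $6$) strongly suggests a code tied to the Fano plane / a $[7,*]$ combinatorial design, and I would present the matrix and verify the seven rank conditions. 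Together with $\beta_{\mathcal{I}_1}\ge 6$ this gives $\lambda_{\mathcal{I}_1,2}=\beta_{\mathcal{I}_1}=6$.

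For the impossibility half, $\lambda_{\mathcal{I}_1,q}>6$ for every $q=2k+1$, the approach is to suppose toward a contradiction that over some field $\mathbb{F}_q$ of odd characteristic there is an optimal encoder matrix $\boldsymbol{H}=\boldsymbol{H}_{\mathcal{I}_1,q}^{\ast}$ with $r=6t$ rows satisfying \eqref{eq:dec-cond}, and then extract rigid structural constraints on the column-block matrices $\boldsymbol{H}_1,\dots,\boldsymbol{H}_{10}$ using Lemmas \ref{lem:MAIS1}, \ref{lem:min-cyc}, and \ref{lem:MAIS2}. Since $\{8,9,10\}$ is an independent set and $A_8=A_9=A_{10}=\emptyset$ means $8,9,10\in B_i$ for every $i\in[7]$, the columns $\boldsymbol{H}_8,\boldsymbol{H}_9,\boldsymbol{H}_{10}$ span a $3t$-dimensional space $W$ that must lie in the interference space of every other user; combined with $\mathrm{rank}\,\boldsymbol{H}=6t$ this leaves only a $3t$-dimensional complement to accommodate $\boldsymbol{H}_1,\dots,\boldsymbol{H}_7$, so modulo $W$ the seven blocks live in an effectively $3$-message-slot space. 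The minimal cycles among $u_1,\dots,u_7$ (read off from the $A_i$: e.g. $1\to 7\to \dots$, the various $3$- and $4$-cycles in that digraph) then force, via Lemma \ref{lem:min-cyc}, a web of invertible linear relations $\boldsymbol{H}_l=\sum_{j}\boldsymbol{H}_j\boldsymbol{M}_{l,j}$ with all $\boldsymbol{M}_{l,j}$ invertible, and Lemma \ref{lem:MAIS2} pins down further relations whenever some $j$ interferes with all-but-one member of an independent set. Chasing these relations around a carefully chosen collection of cycles should yield a matrix identity of the form $c\cdot\boldsymbol{I}_t=\boldsymbol{0}$ (or a product of the $\boldsymbol{M}$'s equalling both $\boldsymbol{I}$ and $-\boldsymbol{I}$) that is consistent only when $2$ is not invertible, i.e. only in characteristic two — exactly the $2x+y=0$-type obstruction inherited from the Dougherty \emph{et al.} network instance.

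The main obstacle I anticipate is the bookkeeping in the impossibility direction: one must (i) enumerate the right minimal cyclic sets and independent-set/external-vertex configurations in the $7$-vertex digraph so that Lemmas \ref{lem:min-cyc} and \ref{lem:MAIS2} apply, and (ii) compose the resulting invertible $t\times t$ matrices $\boldsymbol{M}_{l,j}$ along a closed walk so that the product telescopes to a scalar multiple of $\boldsymbol{I}_t$. The delicate point is ensuring the composed relation genuinely forces characteristic two and is not merely vacuous over all fields — this is where the precise choice of side-information sets (designed, as the introduction says, to mirror the network-coding constraints) does the work, and I would organize the argument so that the contradiction is visibly "$1=-1$" (hence $2=0$), invalid whenever $q$ is odd. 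A secondary, lesser obstacle is confirming the MAIS value is exactly $6$ and that the proposed binary matrix indeed satisfies all seven decoding conditions; these are routine but must be done explicitly.
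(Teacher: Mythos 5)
Your proposal takes essentially the same approach as the paper's proof. For achievability, the paper likewise identifies the MAIS set $\{1,2,3,8,9,10\}$ (note: it is $\{1,2,3\}$, not a generic triple of the first seven, that extends the independent set $\{8,9,10\}$), and exhibits the binary code $y_1=x_1\oplus x_4\oplus x_6\oplus x_7$, $y_2=x_2\oplus x_4\oplus x_5\oplus x_7$, $y_3=x_3\oplus x_5\oplus x_6\oplus x_7$, $y_4=x_8$, $y_5=x_9$, $y_6=x_{10}$ — your Fano-plane intuition is exactly right, since the interference sets $B_i'=B_i\cap[7]$, $i\in[7]$, are precisely the seven lines of the Fano plane. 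For impossibility, the paper formalizes your "mod out by $W$" step via the no-way decomposition $\mathcal{I}_1=\mathcal{I}_1'\nleftrightarrow\mathcal{I}_1''$ and Proposition \ref{prop:no-two}, then reduces to showing $\lambda_{\mathcal{I}_1',q}>3$; the seven minimal cyclic sets $B_i'$ feed Lemma \ref{lem:min-cyc} to produce the invertible-matrix relations, and equating coefficients of $\boldsymbol{H}_1,\boldsymbol{H}_2,\boldsymbol{H}_3$ after substitution indeed telescopes to $\boldsymbol{I}_t=-\boldsymbol{I}_t$. One small correction: Lemma \ref{lem:MAIS2} (the independent-set/external-vertex lemma) is not needed here and is reserved for Theorem \ref{thm:02}; only Lemmas \ref{lem:MAIS1} and \ref{lem:min-cyc} are used for $\mathcal{I}_1$.
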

Each claim is proved separately in Propositions \ref{prop:I1:binary} and \ref{prop:I1:odd}, respectively.
\begin{prop} \label{prop:I1:binary}
   $\lambda_{\mathcal{I}_1,2}=6$. This means Binary linear coding achieves the broadcast rate of $\mathcal{I}_1$.
\end{prop}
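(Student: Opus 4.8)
The plan is to exhibit an explicit scalar binary linear index code with $6$ coded symbols that satisfies the decoding condition \eqref{eq:dec-cond} for all $10$ users, and separately argue that $6$ is also a lower bound. For the lower bound, I would first observe that $\{8,9,10\}$ together with two more suitably chosen vertices from $\{1,\dots,7\}$ form an acyclic induced subgraph (users $8,9,10$ have empty side information, and among $\{1,\dots,7\}$ the interference structure is a "circulant" type graph on $7$ vertices whose MAIS is $3$), so $\beta_{\mathrm{MAIS}(\mathcal{I}_1)}\geq 6$; by Proposition (Bar-Yossef \textit{et al.}) this gives $\lambda_{\mathcal{I}_1,2}\geq \beta_{\mathcal{I}_1}\geq \beta_{\mathrm{MAIS}(\mathcal{I}_1)}\geq 6$. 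Actually I should double check: the subgraph on $\{1,\dots,7\}$ with the given $A_i$'s — each $|A_i|=3$ so each vertex has in-degree $3$ — one needs the largest acyclic induced subset; I expect it to be $3$, giving MAIS $=6$ in total, and if it happens to be larger the construction below would just have to produce fewer rows, so the real content is the achievability.

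For achievability, I would take the $6\times 10$ binary matrix $\boldsymbol{H}_{\mathcal{I}_1,2}^\ast$ whose last three columns (for messages $8,9,10$) are three distinct standard basis vectors, say $e_4,e_5,e_6$, so those users trivially decode, and whose first seven columns $\boldsymbol{h}_1,\dots,\boldsymbol{h}_7\in\mathbb{F}_2^6$ are chosen so that for each $i\in[7]$ the decoding condition $\mathrm{rank}\,\boldsymbol{H}_{\{i\}\cup B_i}=\mathrm{rank}\,\boldsymbol{H}_{B_i}+1$ holds, where $B_i=[10]\setminus(A_i\cup\{i\})$. Since $|A_i|=3$ and the three sink users contribute $A_i\cap\{8,9,10\}=\emptyset$, we have $B_i=([7]\setminus(A_i\cup\{i\}))\cup\{8,9,10\}$, i.e. $|B_i\cap[7]|=3$. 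So the requirement becomes: $\boldsymbol{h}_i$ is \emph{not} in the span of $\{\boldsymbol{h}_j : j\in B_i\cap[7]\}\cup\{e_4,e_5,e_6\}$. Concretely I would try to realize the seven vectors inside the span of $e_1,e_2,e_3$ plus corrections, or more cleanly: split the $6$ rows into two blocks of $3$, put $\boldsymbol{h}_1,\dots,\boldsymbol{h}_7$ as columns of a $6\times 7$ matrix whose top $3\times 7$ block and bottom $3\times 7$ block are chosen so the needed non-containment holds; a natural candidate is to use the structure of the $7$-vertex graph (which looks like it is built from the Fano plane / a cyclic difference set, since the $7$ sets $A_i$ look like translates), and pick $\boldsymbol{h}_i$ to be the incidence vectors of lines of the Fano plane, padded appropriately. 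I would then verify the $7$ rank conditions one by one (each is a small $\mathbb{F}_2$ computation), and note the three conditions for users $8,9,10$ are immediate.

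The main obstacle I anticipate is finding the right explicit $6\times 7$ binary matrix and checking all seven non-membership conditions simultaneously — the constraints are tight (we have $6$ rows, $7$ "real" columns, and each column must avoid a $5$-dimensional subspace spanned by $3$ other real columns plus $3$ fixed basis vectors, so essentially each $\boldsymbol{h}_i$ must have a nonzero component outside $\mathrm{span}(e_4,e_5,e_6)$ that is independent of the corresponding triple), and a careless choice will fail one of them. The Fano-plane / cyclic symmetry of the index set $\{A_1,\dots,A_7\}$ should make this manageable: if the construction respects the cyclic symmetry $i\mapsto i+1 \pmod 7$, one only needs to check a single representative condition, say for $i=1$, and the rest follow by symmetry. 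So the key step is identifying that cyclic structure, choosing $\boldsymbol{h}_1,\dots,\boldsymbol{h}_7$ equivariantly (e.g. as orbit of one vector under a $7$-cycle permutation of coordinates, embedded suitably in $6$ dimensions), and checking the one representative rank equality; the matching lower bound $\lambda_{\mathcal{I}_1,2}\geq 6$ via MAIS then closes the proof that $\lambda_{\mathcal{I}_1,2}=\beta_{\mathcal{I}_1}=6$.
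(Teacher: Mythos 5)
Your overall plan --- a MAIS lower bound of $6$ plus an explicit $6\times 10$ binary encoder with $e_4,e_5,e_6$ on the last three columns --- is exactly the paper's approach. Two things would need fixing. First, a small slip: together with $\{8,9,10\}$ you need \emph{three} more vertices from $\{1,\dots,7\}$, not two, to get an acyclic set of size $6$; the paper's choice is $\{1,2,3,8,9,10\}$. Second, and more substantively, the cyclic-symmetry shortcut you rely on for the achievability check would not go through: $\mathcal{I}_1$ is \emph{not} invariant under $i\mapsto i+1 \pmod 7$ on $\{1,\dots,7\}$ (e.g., shifting $A_1=\{4,6,7\}$ gives $\{5,7,1\}\ne A_2=\{1,5,6\}$, and likewise $B_1\cap[7]=\{2,3,5\}$ does not shift to $B_2\cap[7]=\{3,4,7\}$), so you cannot verify a single representative condition and invoke equivariance --- all seven conditions must be checked individually. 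The concrete matrix that works, which you essentially predicted, sets the bottom $3\times 7$ block of the first seven columns to zero and the top $3\times 7$ block to the seven distinct nonzero vectors of $\mathbb{F}_2^3$: $\boldsymbol{h}_1=e_1$, $\boldsymbol{h}_2=e_2$, $\boldsymbol{h}_3=e_3$, $\boldsymbol{h}_4=e_1+e_2$, $\boldsymbol{h}_5=e_2+e_3$, $\boldsymbol{h}_6=e_1+e_3$, $\boldsymbol{h}_7=e_1+e_2+e_3$, i.e.\ $y_1 = x_1\oplus x_4\oplus x_6\oplus x_7$, $y_2 = x_2\oplus x_4\oplus x_5\oplus x_7$, $y_3 = x_3\oplus x_5\oplus x_6\oplus x_7$, $y_4=x_8$, $y_5=x_9$, $y_6=x_{10}$. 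Your Fano-plane/Hamming-code instinct was the right one; each of the seven decoding conditions reduces to checking that $\boldsymbol{h}_i$ lies outside a two-dimensional span (the three columns indexed by $B_i\cap[7]$ are always linearly dependent), and all seven pass.
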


\begin{proof}
First, it can be observed that $V^{\prime}=\{1,2,3,8,9,10\}$ is a MAIS set of $\mathcal{I}_1$. So $\beta_{\mathrm{MAIS}(\mathcal{I}_1)}=6$. Now, it can be verified that the following scalar binary linear code achieves the MAIS bound: 
\begin{equation} \label{eq:GH}
    \left \{\begin{array}{cc}
        y_1&= x_1 \oplus x_4 \oplus x_6 \oplus x_7, \ \ \ \ \   y_4= x_8,\ \ \\
        y_2&= x_2 \oplus x_4 \oplus x_5 \oplus x_7,  \ \ \ \ \ y_5= x_9,\ \ \\
        y_3&= x_3 \oplus x_5 \oplus x_6 \oplus x_7,  \ \ \ \ \ y_6= x_{10}.
    \end{array} \right.
\end{equation}
Therefore, the binary linear code is optimal.
\end{proof}

\begin{prop} \label{prop:I1:odd}
$\lambda_{\mathcal{I}_1,q}>6$ for all $q= 2k+1, k\geq 1$. This means linear coding over any field with odd characteristic cannot achieve the broadcast rate of $\mathcal{I}_1$.
\end{prop}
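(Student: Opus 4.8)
The plan is to show that no linear encoder matrix over an odd-characteristic field $\mathbb{F}_q$ can satisfy the decoding conditions of $\mathcal{I}_1$ with only $6$ rows, i.e.\ with $t=1$ (and more generally with vector length $t$ and $6t$ rows). Since $\beta_{\mathrm{MAIS}(\mathcal{I}_1)}=6$, an optimal linear code must have an encoder matrix $\boldsymbol{H}^{\ast}_{\mathcal{I}_1,q}$ with exactly $6t$ rows. The three messages $x_8,x_9,x_{10}$ have empty side information, so users $u_8,u_9,u_{10}$ require $\mathrm{rank}\,\boldsymbol{H}_{\{8,9,10\}}=3t$ and, since $\{8,9,10\}$ is independent from every other vertex (none of them lies in any $A_i$), one can argue via Lemma~\ref{lem:MAIS2} that the columns of $\boldsymbol{H}_8,\boldsymbol{H}_9,\boldsymbol{H}_{10}$ must be $t$ ``fresh'' coordinates. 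Effectively this lets us project the problem down to the $7$ messages $x_1,\dots,x_7$ living in a $3t$-dimensional column space: we need a $3t\times 7t$ matrix $\boldsymbol{G}=[\boldsymbol{G}_1|\cdots|\boldsymbol{G}_7]$, each $\boldsymbol{G}_i\in\mathbb{F}_q^{3t\times t}$ of rank $t$, such that for every $i\in[7]$, $\mathrm{rank}\,\boldsymbol{G}_{\{i\}\cup B_i}=\mathrm{rank}\,\boldsymbol{G}_{B_i}+t$ where $B_i=[7]\setminus(A_i\cup\{i\})$.

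Next I would extract the combinatorial structure of $\mathcal{I}_1$ restricted to $\{1,\dots,7\}$. Each $A_i$ has size $3$, so each $B_i$ has size $3$, and the $7$ triples $A_i$ (or $B_i$) form a highly symmetric configuration — in fact the Fano-plane incidence structure — which is exactly what makes the odd/even characteristic dichotomy appear (mirroring the two network subinstances of \cite{Dougherty2005}). Using Lemma~\ref{lem:MAIS1} on the many acyclic $3$-subsets (any $3$ messages forming an independent set of the induced subgraph span $3t$ dimensions, hence a basis of the whole column space), and Lemma~\ref{lem:min-cyc}/Lemma~\ref{lem:MAIS2} on the minimal cycles, I would deduce that, after a change of basis, we may take three of the $\boldsymbol{G}_i$ (say $\boldsymbol{G}_1,\boldsymbol{G}_2,\boldsymbol{G}_3$, scalar case $t=1$) to be the standard basis vectors, and then each remaining $\boldsymbol{G}_i$ is forced (up to the invertible scalars $\boldsymbol{M}_{l,j}$ supplied by the lemmas) to be a specific $0/\pm1/\text{scalar}$ combination dictated by the membership pattern $i\in A_j$. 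Chasing these forced relations around the cycles of $\mathcal{I}_1$ yields a consistency requirement on the scalars.

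The main obstacle — and the heart of the proof — is the final consistency check: propagating the forced linear relations around the closed cycles of the Fano-type configuration produces an equation that, over any field, collapses to something like $2=0$ (or $1+1=0$), which is satisfiable precisely when $\mathrm{char}\,\mathbb{F}_q=2$ and violated for all odd $q$. Concretely I expect to find three (or more) users whose decoding conditions simultaneously force, e.g., $\boldsymbol{G}_7 = -(\boldsymbol{G}_4+\boldsymbol{G}_5+\boldsymbol{G}_6)$-type relations from one family of cycles and $\boldsymbol{G}_7 = (\boldsymbol{G}_4+\boldsymbol{G}_5+\boldsymbol{G}_6)$-type from another, so that their difference forces $2\boldsymbol{G}_7$ (or $2$ times an invertible matrix) to vanish; full-rankness of $\boldsymbol{G}_7$ then rules out odd characteristic. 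I would first nail this down in the scalar case $t=1$ by explicit (but short) cycle-chasing, then note the vector case $t>1$ follows by the same argument with the scalars $\boldsymbol{M}_{l,j}$ replaced by invertible $t\times t$ matrices from Lemmas~\ref{lem:min-cyc} and~\ref{lem:MAIS2}: the "$2=0$'' obstruction becomes "$2\boldsymbol{M}=\boldsymbol{0}$ for an invertible $\boldsymbol{M}$,'' equally impossible when $2$ is a unit. Combining, $\lambda_{\mathcal{I}_1,q}>6$ for all odd $q$, and with Proposition~\ref{prop:I1:binary} this also gives $\beta_{\mathcal{I}_1}=6$, completing Theorem~\ref{thm:01}.
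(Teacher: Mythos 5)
Your plan follows the paper's proof essentially step for step; the route is sound, though it stops at the sketch stage. The paper handles $\{8,9,10\}$ by noting $\mathcal{I}_1 = \mathcal{I}_1' \nleftrightarrow \mathcal{I}_1''$ and transmitting $x_8,x_9,x_{10}$ uncoded, rather than via Lemma~\ref{lem:MAIS2}; either way one lands on your reduced problem: seven encoding blocks $\boldsymbol{H}_1,\dots,\boldsymbol{H}_7$ in a $3t$-dimensional column space, with $\{1,2,3\}$ a MAIS set so that $\boldsymbol{H}_1,\boldsymbol{H}_2,\boldsymbol{H}_3$ span everything by Lemma~\ref{lem:MAIS1}. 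The engine is Lemma~\ref{lem:min-cyc} applied to the seven minimal cyclic sets $B_i'$, each forcing $\boldsymbol{H}_l=\sum_{j\in B_i'\setminus\{l\}}\boldsymbol{H}_j\boldsymbol{M}_{l,j}$ with all $\boldsymbol{M}_{l,j}$ invertible. The cycle-chase you defer is the heart of the argument: the paper expresses $\boldsymbol{H}_7$ in the $\boldsymbol{H}_1,\boldsymbol{H}_2,\boldsymbol{H}_3$ basis via three distinct substitution chains, equates coefficients to get one product identity \eqref{eq:proof3:01}, then repeats through the $\boldsymbol{H}_6$ constraint \eqref{eq:pr:thm:6-4-5} to get the negated identity \eqref{eq:proof5:01}, concluding $\boldsymbol{I}_t = -\boldsymbol{I}_t$ --- exactly the ``$2\boldsymbol{M}=\boldsymbol{0}$ with $\boldsymbol{M}$ invertible'' collision you anticipate, so your intuition about the mechanism is on target. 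Your Fano-plane remark is correct (both the $A_i$ restricted to $[7]$ and the interference triples $B_i'$ form the $(7,3,1)$ design), but the paper does not exploit it explicitly. To turn the proposal into a proof you would still need to write down the seven cycle constraints and exhibit the specific pair of substitution chains that produces the sign clash; the combinatorics do not hand you that step for free.
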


\begin{proof}
First, note that $\mathcal{I}_1=\mathcal{I}_1^{\prime} \nleftrightarrow \mathcal{I}_1^{\prime\prime}$, where $\mathcal{I}_1^{\prime}=\{A_i, i\in [7]\}$ and $\mathcal{I}_1^{\prime\prime}=\{A_8, A_9, A_{10}\}$. For the subinstance $\mathcal{I}_2^{\prime\prime}$, $L_1=\{8,9,10\}$ forms a MAIS set, so $\beta_{\mathrm{MAIS}(\mathcal{I}_2^{\prime\prime})}=3$, which means uncoded transmission $y_8=x_8$, $y_9=x_9$ and $y_{10}=x_{10}$ is optimal. Now, we prove that for the subinstance $\mathcal{I}_1^{\prime}=\{B_{i}^{\prime}, i\in [7]\}$, where $B_{i}^{\prime}=B_{i}\cap [7]$ (local interfering message set), any linear coding over a field with odd characteristic cannot be optimal.
It can be seen that set $L_2=\{1,2,3\}$ is a MAIS set of $\mathcal{I}_1^{\prime}$, so $\beta_{\mathrm{MAIS}(\mathcal{I}_{1}^{\prime})}=3$. To achieve $\lambda_{\mathcal{I}_{1}^{\prime},q}=3$,
the decoding condition in \eqref{eq:dec-cond} and Lemma \ref{lem:MAIS1}, respectively, give
\begin{align}
       \mathrm{rank} \ \boldsymbol{H}_{B_i^{\prime}}&=2t,  \ \ \ \ \ \ \ \forall i\in [7], 
       \label{eq:pr:prop:B_i}
       \\ 
         \mathrm{rank} \ ( \boldsymbol{H}_{\{1,2,3\}}&= 
    \left [\begin{array}{c|c|c}
             \boldsymbol{H}_{1} & \boldsymbol{H}_2 & \boldsymbol{H}_{3}
          \end{array}
    \right ])=3t.
    \label{eq:pr:prop:H_{123}}
\end{align}
So, to have $\mathrm{rank} \ \boldsymbol{H}=3t$, other $\boldsymbol{H}_i, i\in \{4,5,6,7\}$ must be expressed as a linear combination of $\boldsymbol{H}_1, \boldsymbol{H}_2,\boldsymbol{H}_3$.
\\
It can also be observed that each set $B_i^{\prime}, i\in [7]$ 
is a minimal cyclic set. Thus, because we want $\mathrm{rank} \ \boldsymbol{H}_{B_i^{\prime}}=2t$, then based on Lemma \ref{lem:min-cyc}, we will have seven constraints as follows (for simplicity, in this proof, each matrix $\boldsymbol{M}_{j}$ is numerated by only one index)
\begin{align}
     B_1^{\prime}&\rightarrow  \boldsymbol{H}_5= \boldsymbol{H}_2 \boldsymbol{M}_{1} +  \boldsymbol{H}_3 \boldsymbol{M}_{2},  \label{eq:pr:thm:5-1-2}\\ 
     B_2^{\prime}&\rightarrow  \boldsymbol{H}_7= \boldsymbol{H}_3 \boldsymbol{M}_{3} +  \boldsymbol{H}_4 \boldsymbol{M}_{4},  \label{eq:pr:thm:7-3-4}\\ 
     B_3^{\prime}&\rightarrow  \boldsymbol{H}_6= \boldsymbol{H}_4 \boldsymbol{M}_{5} +  \boldsymbol{H}_5 \boldsymbol{M}_{6}, \label{eq:pr:thm:6-4-5}\\ 
     B_4^{\prime}&\rightarrow  \boldsymbol{H}_7= \boldsymbol{H}_1 \boldsymbol{M}_{7} +  \boldsymbol{H}_5 \boldsymbol{M}_{8},  \label{eq:pr:thm:7-1-5}\\ 
     B_5^{\prime}&\rightarrow  \boldsymbol{H}_7= \boldsymbol{H}_2 \boldsymbol{M}_{9} +  \boldsymbol{H}_6 \boldsymbol{M}_{10},  \label{eq:pr:thm:7-2-6}\\ 
     B_6^{\prime}&\rightarrow  \boldsymbol{H}_4= \boldsymbol{H}_1 \boldsymbol{M}_{11} +  \boldsymbol{H}_2 \boldsymbol{M}_{12}, \label{eq:pr:thm:4-1-2} 
     \\ 
     B_7^{\prime}&\rightarrow  \boldsymbol{H}_6= \boldsymbol{H}_1 \boldsymbol{M}_{13} +  \boldsymbol{H}_3 \boldsymbol{M}_{14}, \label{eq:pr:thm:6-1-3}
\end{align}
where the matrices $\boldsymbol{M}_j\in \mathbb{F}_{q}^{t \times t}$, $j\in[14]$ must be all invertible. We show that meeting these seven constraints will lead to a contradiction over any field with odd characteristic. Now, in \eqref{eq:pr:thm:7-3-4}, \eqref{eq:pr:thm:7-1-5} and \eqref{eq:pr:thm:7-2-6}, we replace $\boldsymbol{H}_4, \boldsymbol{H}_5, \boldsymbol{H}_6$ with their equal term, respectively, in \eqref{eq:pr:thm:4-1-2}, \eqref{eq:pr:thm:5-1-2} and \eqref{eq:pr:thm:6-1-3}, which leads to
\begin{align}
    \eqref{eq:pr:thm:7-3-4},\ \eqref{eq:pr:thm:4-1-2}&\rightarrow \boldsymbol{H}_7= \boldsymbol{H}_1 \boldsymbol{M}_{11}\boldsymbol{M}_{4} +  \boldsymbol{H}_2 \boldsymbol{M}_{12}\boldsymbol{M}_{4} +  \boldsymbol{H}_3 \boldsymbol{M}_{3}, \label{eq:proof1:01}
    \\
    \eqref{eq:pr:thm:7-1-5}, \ \eqref{eq:pr:thm:5-1-2}&\rightarrow \boldsymbol{H}_7= \boldsymbol{H}_1 \boldsymbol{M}_{7} +  \boldsymbol{H}_2 \boldsymbol{M}_{1}\boldsymbol{M}_{8} +  \boldsymbol{H}_3 \boldsymbol{M}_{2}\boldsymbol{M}_{8}, \label{eq:proof1:02}
    \\ 
    \eqref{eq:pr:thm:7-2-6}, \ \eqref{eq:pr:thm:6-1-3}&\rightarrow \boldsymbol{H}_7= \boldsymbol{H}_1 \boldsymbol{M}_{13}\boldsymbol{M}_{10} +  \boldsymbol{H}_2 \boldsymbol{M}_{9} +  \boldsymbol{H}_3 \boldsymbol{M}_{14}\boldsymbol{M}_{10}. \label{eq:proof1:03}
\end{align}
Due to \eqref{eq:pr:prop:H_{123}}, it can be seen that \eqref{eq:proof1:01}, \eqref{eq:proof1:02} and \eqref{eq:proof1:03} are equal iff their coefficients of $\boldsymbol{H}_1, \boldsymbol{H}_3, \boldsymbol{H}_2$ will be equal. So, equating coefficients of $\boldsymbol{H}_1, \boldsymbol{H}_3, \boldsymbol{H}_2$, respectively, gives
\begin{align}
    \boldsymbol{M}_{11}\boldsymbol{M}_{4}&=\boldsymbol{M}_{13}\boldsymbol{M}_{10} \rightarrow \boldsymbol{M}_{4}= \boldsymbol{M}_{11}^{-1}\boldsymbol{M}_{13}\boldsymbol{M}_{10}, \label{eq:proof2:01}\\ 
    \boldsymbol{M}_{2}\boldsymbol{M}_{8}&=\boldsymbol{M}_{14}\boldsymbol{M}_{10}\rightarrow \boldsymbol{M}_{8}= \boldsymbol{M}_{2}^{-1}\boldsymbol{M}_{14}\boldsymbol{M}_{10}, \label{eq:proof2:02}\\ 
   \boldsymbol{M}_{12}\boldsymbol{M}_{4}&=\boldsymbol{M}_{1}\boldsymbol{M}_{8}. \label{eq:proof2:03}
\end{align}
Now, in \eqref{eq:proof2:03}, we substitute $\boldsymbol{M}_{4}$ and $\boldsymbol{M}_{8}$ with their equal term, respectively, in \eqref{eq:proof2:01} and \eqref{eq:proof2:02} which results in
\begin{equation}
\boldsymbol{M}_{12}\boldsymbol{M}_{11}^{-1}\boldsymbol{M}_{13}=\boldsymbol{M}_{1}\boldsymbol{M}_{2}^{-1}\boldsymbol{M}_{14}. \label{eq:proof3:01}
\end{equation}
On the other hand, in \eqref{eq:pr:thm:6-4-5}, we replace $\boldsymbol{H}_4, \boldsymbol{H}_5, \boldsymbol{H}_6$ with their equal term, respectively, in \eqref{eq:pr:thm:4-1-2}, \eqref{eq:pr:thm:5-1-2} and \eqref{eq:pr:thm:6-1-3}. Then, equating coefficients of $\boldsymbol{H}_1, \boldsymbol{H}_2, \boldsymbol{H}_3$ gives
\begin{align}
    \boldsymbol{M}_{13}&=\boldsymbol{M}_{11} \boldsymbol{M}_{5}, \label{eq:proof4:01} \\ 
    \boldsymbol{M}_{14}&=\boldsymbol{M}_2 \boldsymbol{M}_{6}, \label{eq:proof4:02} \\ 
    \boldsymbol{M}_{12}\boldsymbol{M}_{5}&+\boldsymbol{M}_{1} \boldsymbol{M}_{6}=0. \label{eq:proof4:03}
\end{align}
Now, in \eqref{eq:proof4:03}, we substitute $\boldsymbol{M}_{5}$ and $\boldsymbol{M}_{6}$ with their equal term in \eqref{eq:proof4:01}, \eqref{eq:proof4:02}, respectively, which gives
\begin{equation}
    \boldsymbol{M}_{12}\boldsymbol{M}_{11}^{-1}\boldsymbol{M}_{13}+\boldsymbol{M}_{1}\boldsymbol{M}_{2}^{-1}\boldsymbol{M}_{14}=0. \label{eq:proof5:01}
\end{equation}
Finally, since all the $\boldsymbol{M}_i$'s are invertible, from \eqref{eq:proof3:01} and \eqref{eq:proof5:01}, we must have $\boldsymbol{I}_t=-\boldsymbol{I}_t$, which is not possible over any field with odd characteristic. This completes the proof.
\end{proof}

\section{The Index coding instance $\mathcal{I}_2$} \label{sec:05}
This section provides the index coding instance $\mathcal{I}_2$ where its broadcast rate is $\beta_{\mathcal{I}_2}=6$. We prove that this rate is not achievable by linear coding over any finite field with characteristic two. However, we show that, there exists a nonlinear code over the binary field which can achieve the broadcast rate.

The index coding instance $\mathcal{I}_2=\{B_i, i\in[26]\}$ is characterized as follows.
\begin{equation}
  \left \{
    \begin{array}{ccc}
     B_1&=\{2,3,5,11,12,13,15,22,23,24,25,26\}, \\
     B_2&=\{1,3,6,11,12,13,16,21,22,24,25,26\}, \\
     B_3&=\{1,2,4,11,12,13,14,21,22,23,24,26\},  \\
     B_4&=\{5,6,14,15,16\}, \ \ \ \ \ \ \ \ \ \ \ \ \ \ \ \ \ \ \ \ \ \ \ \ \ \ \ \ \\
     B_5&=\{4,6,14,15,16\}, \ \ \ \ \ \ \ \ \ \ \ \ \ \ \ \ \ \ \ \ \ \ \ \ \ \ \ \ \\
     B_6&=\{4,5,14,15,16\}, \ \ \ \ \ \ \ \ \ \ \ \ \ \ \ \ \ \ \ \ \ \ \ \ \ \ \ \ \\
     B_7&=\{17\}, \ \ \ \ \ \ \ \ \ \ \ \ \ \ \ \ \ \ \ \ \ \ \ \ \ \ \ \ \ \ \ \ \ \ \ \ \ \ \ \ \  \\
     B_8&=\{1,5,7,11,15,17,18\},\ \  \ \ \ \ \ \ \ \ \ \ \ \ \ \ \ \ \ \ \ 
     \\
     B_{9}&=\{2,6,7,12,16,17,19\}, \ \ \ \ \ \ \ \ \ \ \ \ \ \ \ \ \ \ \ \ \ \\
     B_{10}&=\{3,4,7,13,14,17,20\}, \ \ \ \ \ \ \ \ \ \ \ \ \ \ \ \ \ \ \ \ \\
     B_{11}&=\{1,2,3,5,12,13,15,21,23,24,25,26\}, \\
     B_{12}&=\{1,2,3,6,11,13,16,21,22,23,25,26\}, \\
     B_{13}&=\{1,2,3,4,11,12,14,21,22,23,24,25\},  \\
     B_{14}&=\{4\}, \ \ \ \ \ \ \ \ \ \ \ \ \ \ \ \ \ \ \ \ \ \ \ \ \ \ \ \ \ \ \ \ \ \ \ \ \ \ \ \ \ \ \\
     B_{15}&=\{5\}, \ \ \ \ \ \ \ \ \ \ \ \ \ \ \ \ \ \ \ \ \ \ \ \ \ \ \ \ \ \ \ \ \ \ \ \ \ \ \ \ \ \ \\
     B_{16}&=\{6\}, \ \ \ \ \ \ \ \ \ \ \ \ \ \ \ \ \ \ \ \ \ \ \ \ \ \ \ \ \ \ \ \ \ \ \ \ \ \ \ \ \ \ \\
     B_{17}&=\{7\}, \ \ \ \ \ \ \ \ \ \ \ \ \ \ \ \ \ \ \ \ \ \ \ \ \ \ \ \ \ \ \ \ \ \ \ \ \ \ \ \ \ \ \\
     B_{18}&=\{1,5,7,8,11,15,17\},\ \  \ \ \ \ \ \ \ \ \ \ \ \ \ \ \ \ \ \ \ 
     \\
     B_{19}&=\{2,6,7,9,12,16,17\}, \ \ \ \ \ \ \ \ \ \ \ \ \ \ \ \ \ \ \ \ \ \\
     B_{20}&=\{3,4,7,10,13,14,17\}, \ \ \ \ \ \ \ \ \ \ \ \ \ \ \ \ \ \ \ \ \\
     B_{21}&=\{4,6,14,16,22\},\ \ \ \ \ \ \ \ \ \ \ \ \ \ \ \ \ \ \ \ \ \ \ \ \ \ \ \\
     B_{22}&=\{4,6,14,16,21\},\ \ \ \ \ \ \ \ \ \ \ \ \ \ \ \ \ \ \ \ \ \ \ \ \ \ \ \\
     B_{23}&=\{4,5,14,15,24\},\ \ \ \ \ \ \ \ \ \ \ \ \ \ \ \ \ \ \ \ \ \ \ \ \ \ \ \\
     B_{24}&=\{4,5,14,15,23\},\ \ \ \ \ \ \ \ \ \ \ \ \ \ \ \ \ \ \ \ \ \ \ \ \ \ \ \\
     B_{25}&=\{5,6,15,16,26\},\ \ \ \ \ \ \ \ \ \ \ \ \ \ \ \ \ \ \ \ \ \ \ \ \ \ \ \\
     B_{26}&=\{5,6,15,16,25\}.\ \ \ \ \ \ \ \ \ \ \ \ \ \ \ \ \ \ \ \ \ \ \ \ \ \ \ 
    \end{array}
  \right.
\end{equation}

\begin{thm} \label{thm:02}
$\lambda_{\mathcal{I}_2,q}>6$, $\forall q=2k, k\geq 1$. This means that linear coding over any finite field with characteristic two cannot achieve the broadcast rate of $\mathcal{I}_2$.
\end{thm}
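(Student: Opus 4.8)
The plan is to mirror the proof of Proposition~\ref{prop:I1:odd}, but to arrange the bookkeeping so that the terminal algebraic obstruction is one that fails in characteristic two rather than in odd characteristic. I would first note that $V'=\{1,2,3,11,12,13\}$ is an independent (hence acyclic) set of $\mathcal{I}_2$ — one checks directly that each of these six vertices has the other five inside its interfering set $B_i$ — so Lemma~\ref{lem:MAIS1} gives $\beta_{\mathrm{MAIS}(\mathcal{I}_2)}\ge 6$ and therefore $\lambda_{\mathcal{I}_2,q}\ge 6$ for every $q$. If the inequality is strict over a given $\mathbb{F}_q$ with $q=2k$ we are done; so suppose $\lambda_{\mathcal{I}_2,q}=6$ and take, as in Proposition~\ref{prop:I1:odd}, an encoder matrix $\boldsymbol{H}$ with $6t$ rows satisfying \eqref{eq:dec-cond} for all $26$ users. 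Applying Lemma~\ref{lem:MAIS1} to $V'$ forces $\mathrm{rank}\,\boldsymbol{H}_{V'}=6t$, hence $\mathrm{rank}\,\boldsymbol{H}=6t$, so every $\boldsymbol{H}_i$, $i\in[26]$, is a right-linear combination of $\boldsymbol{H}_1,\boldsymbol{H}_2,\boldsymbol{H}_3,\boldsymbol{H}_{11},\boldsymbol{H}_{12},\boldsymbol{H}_{13}$.

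Next I would extract matrix identities from the combinatorics of $\mathcal{I}_2$, exactly as \eqref{eq:pr:thm:5-1-2}--\eqref{eq:pr:thm:6-1-3} were obtained for $\mathcal{I}_1$. The singleton/near-singleton interfering sets $B_7=\{17\}$, $B_{14}=\{4\}$, $B_{15}=\{5\}$, $B_{16}=\{6\}$, $B_{17}=\{7\}$, together with the twin pairs $(B_8,B_{18})$, $(B_9,B_{19})$, $(B_{10},B_{20})$, $(B_{21},B_{22})$, $(B_{23},B_{24})$, $(B_{25},B_{26})$, pin down the encoder matrices of the ``auxiliary'' messages $7,\dots,10,14,\dots,20,21,\dots,26$ in terms of those of $1,\dots,6$: via Lemma~\ref{lem:MAIS2} (an independent set plus an extra vertex lying in all but one of its members' interfering sets) and, where the relevant triple is a minimal cyclic set, via Lemma~\ref{lem:min-cyc}, one gets relations $\boldsymbol{H}_a=\boldsymbol{H}_b\boldsymbol{M}$ and $\boldsymbol{H}_a=\boldsymbol{H}_b\boldsymbol{M}+\boldsymbol{H}_c\boldsymbol{M}'$ with all coefficients invertible. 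The users $4,5,6$ and $1,2,3,11,12,13$ then contribute, through Lemma~\ref{lem:min-cyc}, the ``line'' relations: after using the twinning identities to eliminate the auxiliary variables, these express $\boldsymbol{H}_4,\boldsymbol{H}_5,\boldsymbol{H}_6$ (and their twins) through three of the basis matrices, say $\boldsymbol{H}_1,\boldsymbol{H}_2,\boldsymbol{H}_3$, and encode the \emph{non}-Fano incidence pattern — crucially, $\{4,5,6\}$ being an \emph{independent} set of $\mathcal{I}_2$ (one checks $\{5,6\}\subseteq B_4$, $\{4,6\}\subseteq B_5$, $\{4,5\}\subseteq B_6$) forces $\mathrm{rank}\,\boldsymbol{H}_{\{4,5,6\}}=3t$ by Lemma~\ref{lem:MAIS1}.

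Then I would run the substitutions. Expressing $\boldsymbol{H}_4,\boldsymbol{H}_5,\boldsymbol{H}_6$ in the basis $\{\boldsymbol{H}_1,\boldsymbol{H}_2,\boldsymbol{H}_3\}$ along two different chains of the line relations — one routed through the ``layer'' $\{1,\dots,10\}$, the other through the ``layer'' $\{11,\dots,20\}$, the two being tied together through $\{21,\dots,26\}$ — and equating the coefficients of $\boldsymbol{H}_1,\boldsymbol{H}_2,\boldsymbol{H}_3$ (legitimate because $\mathrm{rank}\,\boldsymbol{H}_{\{1,2,3\}}=3t$), I expect, after the same kind of cancellation performed in \eqref{eq:proof2:01}--\eqref{eq:proof5:01}, two identities of the form $\boldsymbol{N}_1=\boldsymbol{N}_2$ and $\boldsymbol{N}_1+\boldsymbol{N}_2=\boldsymbol{N}_3$ with $\boldsymbol{N}_1,\boldsymbol{N}_2,\boldsymbol{N}_3$ products of invertible matrices. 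Substituting the first into the second gives $2\boldsymbol{N}_1=\boldsymbol{N}_3$; over a field of characteristic two the left-hand side is $\boldsymbol{0}$ while $\boldsymbol{N}_3$ is invertible — a contradiction. (Equivalently: the relations force $\boldsymbol{H}_4,\boldsymbol{H}_5,\boldsymbol{H}_6$ to be ``collinear'' over characteristic two, i.e.\ $\mathrm{rank}\,\boldsymbol{H}_{\{4,5,6\}}<3t$, contradicting Lemma~\ref{lem:MAIS1}.) Hence no linear code of rate $6$ exists over $\mathbb{F}_{2k}$, and $\lambda_{\mathcal{I}_2,q}>6$ for every even $q$.

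The step I expect to be the real work is the one that is trivial for the scalar Fano computation but not here: all coefficients are non-commuting invertible $t\times t$ matrices, so the two substitution chains have to be arranged so that the products line up for term-by-term comparison — this is precisely why the twinning identities are needed and why the order of the substitutions matters — and one must first verify carefully, user by user, which of the $26$ interfering sets are minimal cyclic sets, which subsets are independent or acyclic, and which auxiliary rank conditions hold, before Lemmas~\ref{lem:min-cyc} and~\ref{lem:MAIS2} may legitimately be invoked. The conceptual content — that the non-Fano incidence pattern embedded in $\mathcal{I}_2$ admits a linear realization only when $1+1\neq 0$ — is the exact mirror of Proposition~\ref{prop:I1:odd}; the work lies in pushing the reduction through with matrix-valued coefficients.
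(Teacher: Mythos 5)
Your plan is conceptually on target and does mirror the paper's own proof of Theorem~\ref{thm:02}: the paper likewise starts from the independent set $\{1,2,3,11,12,13\}$, forces $\mathrm{rank}\,\boldsymbol{H}=6t$, and expresses every other $\boldsymbol{H}_i$ in that basis before extracting a contradiction specific to characteristic two, driven by the non-Fano incidence pattern. But you are right to flag that the remaining bookkeeping ``is the real work,'' and the paper's realization of it differs from your outline in two ways worth naming.

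First, the paper never works with the scalar blocks $\boldsymbol{H}_4,\boldsymbol{H}_5,\boldsymbol{H}_6$ individually: the entire argument is carried out on the \emph{paired} $r\times 2t$ blocks $\boldsymbol{H}_{\{i,i+10\}}$, and the coefficients are $2t\times 2t$ matrices
\[
\boldsymbol{N}_{i,j}=\begin{bmatrix}\boldsymbol{M}_{i,j}&\boldsymbol{M}_{i+10,j}\\\boldsymbol{M}_{i,j+10}&\boldsymbol{M}_{i+10,j+10}\end{bmatrix}.
\]
The users $21,\dots,26$ (via Lemma~\ref{lem:MAIS2}) exist precisely to force invertibility of the six $\boldsymbol{N}$'s appearing in the ``line'' relations; that is Step~3, and it is not automatic. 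Your phrase ``express $\boldsymbol{H}_4,\boldsymbol{H}_5,\boldsymbol{H}_6$ through three of the basis matrices'' misses that each $\boldsymbol{H}_{\{i,i+10\}}$, $i\in\{4,5,6\}$, lies in the span of exactly \emph{two} of the three pairs $\boldsymbol{H}_{\{1,11\}},\boldsymbol{H}_{\{2,12\}},\boldsymbol{H}_{\{3,13\}}$ — this is the incidence pattern — and nothing in the instance lets you collapse to $\boldsymbol{H}_1,\boldsymbol{H}_2,\boldsymbol{H}_3$ alone. Second, the terminal contradiction does not come out in the form $2\boldsymbol{N}_1=\boldsymbol{N}_3=0$. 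What the paper does is derive \emph{three} distinct representations of the single pair $\boldsymbol{H}_{\{7,17\}}$ (one for each of the twin users $8/18$, $9/19$, $10/20$, each routed through a different $\boldsymbol{H}_{\{i,i+10\}}$, $i\in\{4,5,6\}$), then equate coefficients of $\boldsymbol{H}_{\{1,11\}},\boldsymbol{H}_{\{2,12\}},\boldsymbol{H}_{\{3,13\}}$. Of the three identities so produced, one reads $\boldsymbol{N}_{5,3}\boldsymbol{N}_{7,5}=\boldsymbol{N}_{6,3}\boldsymbol{N}_{7,6}$, which in characteristic two becomes $\boldsymbol{N}_{5,3}\boldsymbol{N}_{7,5}+\boldsymbol{N}_{6,3}\boldsymbol{N}_{7,6}=\boldsymbol{0}$; assembling all three shows $\boldsymbol{H}_{\{4,14\}}\boldsymbol{N}_{7,4}=\boldsymbol{H}_{\{5,15\}}\boldsymbol{N}_{7,5}+\boldsymbol{H}_{\{6,16\}}\boldsymbol{N}_{7,6}$ with all $\boldsymbol{N}_{7,\cdot}$ invertible. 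The contradiction is then that the decoding condition \eqref{eq:dec-cond} fails for users $4,5,6$ — not that $\mathrm{rank}\,\boldsymbol{H}_{\{4,5,6\}}<3t$, which does not follow directly because the dependence is among the paired blocks $\boldsymbol{H}_{\{i,i+10\}}$, not among $\boldsymbol{H}_4,\boldsymbol{H}_5,\boldsymbol{H}_6$ alone. Finally, Lemma~\ref{lem:min-cyc} plays no role in this proof; the paper works entirely through Lemma~\ref{lem:MAIS1}, Lemma~\ref{lem:MAIS2}, and direct rank deductions in the style of \eqref{eq:top:-1}--\eqref{eq:top:06}.
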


\begin{proof}
First, it can be observed that $L=\{1,2,3,11,12,13\}$ is an independent set of $\mathcal{I}_2$. Thus, based on Lemma \ref{lem:MAIS1}, 
\begin{equation} \label{eq:pr:thm2:L-ind}
    \mathrm{rank}\ \boldsymbol{H}_{L}=6t.
\end{equation}
Now, if $\lambda_{\mathcal{I}_2,q}=6$, then $\mathrm{rank} \boldsymbol{H}=6t$, and thus, each $\boldsymbol{H}_j, j\in[26]\backslash L$ must be expressed as a linear combination of $\boldsymbol{H}_j, j\in L$ as $\boldsymbol{H}_j=\sum_{l\in L} \boldsymbol{H}_{l}\boldsymbol{M}_{j,l}$.
In the following, through \textbf{\textit{Steps 1 to 5}}, we determine the constraints which must be held on the linear space of each $\boldsymbol{H}_i, i\in [27]\backslash L$, which will finally lead to a contradiction if the field has characteristic two.
\\
\textbf{\textit{Step 1}}: In this step, we determine the constraints on the linear space of $\boldsymbol{H}_i, i\in\{21,...,26\}$.
\\
Lemma \ref{lem:MAIS2} gives the following results.
\begin{align} 
        21\in B_i, \forall i\in L\backslash \{1\} &\rightarrow \boldsymbol{H}_{21}=\boldsymbol{H}_{1}\boldsymbol{M}_{21,1}, \label{eq:pr:thm:21}
         \\ 
        22\in B_i, \forall i\in L\backslash \{11\} &\rightarrow \boldsymbol{H}_{22}=\boldsymbol{H}_{11}\boldsymbol{M}_{22,11}, \label{eq:pr:thm:22}
         \\ 
        23\in B_i, \forall i\in L\backslash \{2\} &\rightarrow \boldsymbol{H}_{23}=\boldsymbol{H}_{2}\boldsymbol{M}_{23,2}, \label{eq:pr:thm:23}
         \\
        24\in B_i, \forall i\in L\backslash \{12\} &\rightarrow \boldsymbol{H}_{24}=\boldsymbol{H}_{12}\boldsymbol{M}_{24,12}, \label{eq:pr:thm:24}
         \\
        25\in B_i, \forall i\in L\backslash \{3\} &\rightarrow \boldsymbol{H}_{25}=\boldsymbol{H}_{3}\boldsymbol{M}_{25,3}, \label{eq:pr:thm:25}
         \\ 
        26\in B_i, \forall i\in L\backslash \{13\} &\rightarrow \boldsymbol{H}_{26}=\boldsymbol{H}_{13}\boldsymbol{M}_{26,13}, \label{eq:pr:thm:26}
\end{align}
for some invertible matrices $\boldsymbol{M}_{21,1}$, $\boldsymbol{M}_{22,11}$, $\boldsymbol{M}_{23,2}$, $\boldsymbol{M}_{24,12}$, $\boldsymbol{M}_{25,3}$ and $\boldsymbol{M}_{26,13}$, respectively.
\\
\textbf{\textit{Step 2}}: In this step, we determine the constraints on the linear space of $\boldsymbol{H}_{\{i,i+10\}}, i\in\{4,5,6\}$, which we write them as
\begin{align}
    \boldsymbol{H}_{\{i,i+10\}}=
    \left [\begin{array}{c|c|c} \boldsymbol{H}_{\{1,11\}}\boldsymbol{N}_{i,1} & \boldsymbol{H}_{\{2,12\}}\boldsymbol{N}_{i,2} & \boldsymbol{H}_{\{3,13\}}\boldsymbol{N}_{i,3} \end{array} \right ],
\end{align}
where, 
\begin{equation}
    \boldsymbol{N}_{i,j}\triangleq
    \begin{bmatrix}
       \boldsymbol{M}_{i,j} & \boldsymbol{M}_{i+10,j}\\
       \boldsymbol{M}_{i,j+10} & \boldsymbol{M}_{i+10,j+10}
    \end{bmatrix}.
\end{equation} 
Let $C_{1}^{\prime}=\{2,3,12,13\}\subset C_1=\{2,3,5,12,13,15\}$, $C_{2}^{\prime}=\{1,3,11,13\}\subset C_2=\{1,3,6,11,13,16\}$ and $C_{3}^{\prime}=\{1,2,11,12\}\subset C_3=\{1,2,4,11,12,14\}$. Then, for all $i=1,2,3$, we have
\begin{align}
    6t&= \mathrm{rank} \ \boldsymbol{H}
    \label{eq:pr:thm2:C_i:1-main}
    \\
    &\geq \mathrm{rank} \ \boldsymbol{H}_{\{i\}\cup (\{i+10\}\cup C_i)}
    \label{eq:pr:thm2:C_i:1}
    \\
    &= \mathrm{rank} \ \boldsymbol{H}_{\{i+10\}\cup C_i} + t
    \label{eq:pr:thm2:C_i:2}
    \\ 
    &= \mathrm{rank} \ \boldsymbol{H}_{C_i} + 2t
    \label{eq:pr:thm2:C_i:3}
    \\
    &\geq \mathrm{rank} \ \boldsymbol{H}_{C_{i}^{\prime}} + 2t
    \label{eq:pr:thm2:C_i:4}
    \\
    &=6t,
    \label{eq:pr:thm2:C_i:5}
\end{align}
where \eqref{eq:pr:thm2:C_i:1-main} is because we desire $\lambda_{\mathcal{I}_2,q}=6$, \eqref{eq:pr:thm2:C_i:1} is due to \eqref{eq:rem:dec-con-3}, \eqref{eq:pr:thm2:C_i:2} and \eqref{eq:pr:thm2:C_i:3} are, respectively, because of \eqref{eq:rem:dec-con-1} and the fact that $\{i+10\}\cup C_i\subset B_i$ and $C_i\subset B_{i+10}$. \eqref{eq:pr:thm2:C_i:4} is due to $C_{i}^{\prime}\subset C_i$, and finally, \eqref{eq:pr:thm2:C_i:5} follows from the fact that each $C_{i}^{\prime}$ is an independent set. Now, based on \eqref{eq:pr:thm2:C_i:1-main}, \dots, \eqref{eq:pr:thm2:C_i:5}, we have
\begin{equation}
    \mathrm{rank} \ \boldsymbol{H}_{C_i}=\mathrm{rank} \ \boldsymbol{H}_{C_{i}^{\prime}}=4t, \ \ \ \ \forall i=1,2,3,
\end{equation}
which implies that each $\boldsymbol{H}_{C_{i}\backslash C_{i}^{\prime}}$ must be expressed as a linear combination of $\boldsymbol{H}_{C_{i}^{\prime}}$ for $i=1,2,3$. This, respectively, results in
\begin{align}
     \boldsymbol{H}_{\{5,15\}}&=\left [\begin{array}{c|c} \boldsymbol{H}_{\{2,12\}}\boldsymbol{N}_{5,2} & \boldsymbol{H}_{\{3,13\}}\boldsymbol{N}_{5,3} \end{array} \right],
     \label{eq:pr:thm2:5-15}
    \\ 
     \boldsymbol{H}_{\{6,16\}}&=\left [\begin{array}{c|c} \boldsymbol{H}_{\{1,11\}}\boldsymbol{N}_{6,1} & \boldsymbol{H}_{\{3,13\}}\boldsymbol{N}_{6,3} \end{array} \right],
    \label{eq:pr:thm2:6-16}
    \\
    \boldsymbol{H}_{\{4,14\}}&=\left [\begin{array}{c|c} \boldsymbol{H}_{\{1,11\}}\boldsymbol{N}_{4,1} & \boldsymbol{H}_{\{2,12\}}\boldsymbol{N}_{4,2} \end{array} \right].
    \label{eq:pr:thm2:4-14}
\end{align}
\textbf{\textit{Step 3}}: In this step, we show that each $\boldsymbol{N}_{5,2}$, $\boldsymbol{N}_{5,3}$, $\boldsymbol{N}_{6,1}$, $\boldsymbol{N}_{6,3}$, $\boldsymbol{N}_{4,1}$ and $\boldsymbol{N}_{4,2}$ is invertible.
\\
First, let $D_1=\{4,14,21,22\}$, $D_2=\{4,14,23,24\}$, $D_3=\{5,15,23,24\}$, $D_4=\{5,15,25,26\}$, $D_5=\{6,16,21,22\}$ and $D_6=\{6,16,25,26\}$. Then, for $D_1$, we have
    \begin{align}
    4t&=\mathrm{rank} \ \boldsymbol{H}_{D_1}
    \label{eq:top:-1}
    \\
    &=\mathrm{rank}\ \left [\begin{array}{c|c} \boldsymbol{H}_{\{4,14\}} & \boldsymbol{H}_{\{21,22\}} \end{array} \right]
    \nonumber
    \\
     &=\mathrm{rank}\ \left [\begin{array}{c|c} \boldsymbol{H}_{\{4,14\}} & \boldsymbol{H}_{\{1,11\}} \begin{bmatrix} \boldsymbol{M}_{21,1} & \boldsymbol{M}_{22,11} \end{bmatrix}^T \end{array}  \right ]
     \label{eq:top:01}
    \\
    &=\mathrm{rank}\ \left [\begin{array}{c|c} \boldsymbol{H}_{\{4,14\}} & \boldsymbol{H}_{\{1,11\}} \end{array} \right]
    \label{eq:top:02}
    \\
    &=\mathrm{rank}\ \left [\begin{array}{c|c|c} \boldsymbol{H}_{\{1,11\}}\boldsymbol{N}_{4,1} & \boldsymbol{H}_{\{2,12\}}\boldsymbol{N}_{4,2} & \boldsymbol{H}_{\{1,11\}} \end{array} \right]
    \label{eq:top:03}
    \\
    &=\mathrm{rank}\ \left [\begin{array}{c|c} \boldsymbol{H}_{\{2,12\}}\boldsymbol{N}_{4,2} & \boldsymbol{H}_{\{1,11\}} \end{array} \right]
    \label{eq:top:04}
    \\
    &=
    \mathrm{rank}\ \boldsymbol{H}_{\{2,12\}}\boldsymbol{N}_{4,2}+ \mathrm{rank} \ \boldsymbol{H}_{\{1,11\}}
    \label{eq:top:05}
    \\
    &=\mathrm{rank}\ \boldsymbol{H}_{\{2,12\}}\boldsymbol{N}_{4,2}+ 2t,
    \label{eq:top:06}
    \end{align}
where \eqref{eq:top:-1} follows from the fact that each $D_i, i\in[6]$ is an independent set. \eqref{eq:top:01} is due to \eqref{eq:pr:thm:21} and \eqref{eq:pr:thm:22}, \eqref{eq:top:02} is because of the invertibility of $\boldsymbol{M}_{21,1}$ and $\boldsymbol{M}_{22,11}$, \eqref{eq:top:03} is due to \eqref{eq:pr:thm2:4-14}, \eqref{eq:top:04} follows from the property of the $\mathrm{rank}$ function by removing $\boldsymbol{H}_{\{1,11\}}\boldsymbol{N}_{4,1}$ which is a subspace of $\boldsymbol{H}_{\{1,11\}}$. \eqref{eq:top:05} is due to the fact that $L_2=\{1,2,11,12\}$ is an independent set, so $\boldsymbol{H}_{\{2,12\}}\boldsymbol{N}_{4,2}$ and $\boldsymbol{H}_{\{1,11\}}$ must be linearly independent. Now, based on \eqref{eq:top:-1}, \dots, \eqref{eq:top:06}, we have
\begin{equation} \label{eq:pr:thm2:N_42}
    \mathrm{rank}\ \boldsymbol{N}_{4,2}=2t.
\end{equation}
Similarly, by taking the same steps in \eqref{eq:top:-1}, $\cdots$, \eqref{eq:pr:thm2:N_42} for the remaining $D_i, i=2,3,4,5,6$, we will have
\begin{align}
     D_2\ \text{and using}\ \eqref{eq:pr:thm:23}, \eqref{eq:pr:thm:24}, &\eqref{eq:pr:thm2:4-14}\rightarrow
    \mathrm{rank}\ \boldsymbol{N}_{4,1}=2t,
    \label{eq:pr:thm2:D_2}
    \\ 
    D_3\ \text{and using}\ \eqref{eq:pr:thm:23}, \eqref{eq:pr:thm:24}, &\eqref{eq:pr:thm2:5-15}\rightarrow 
    \mathrm{rank}\ \boldsymbol{N}_{5,3}=2t, 
    \label{eq:pr:thm2:D_3}
    \\
     D_4\ \text{and using}\ \eqref{eq:pr:thm:25}, \eqref{eq:pr:thm:26}, &\eqref{eq:pr:thm2:5-15}\rightarrow 
    \mathrm{rank}\ \boldsymbol{N}_{5,2}=2t,
    \label{eq:pr:thm2:d4} 
    \\
     D_5\ \text{and using}\ \eqref{eq:pr:thm:21}, \eqref{eq:pr:thm:22}, &\eqref{eq:pr:thm2:6-16}\rightarrow 
    \mathrm{rank}\ \boldsymbol{N}_{6,3}=2t,
    \label{eq:pr:thm2:D_5}
    \\
    D_6\ \text{and using}\ \eqref{eq:pr:thm:25}, \eqref{eq:pr:thm:26}, &\eqref{eq:pr:thm2:6-16}\rightarrow 
    \mathrm{rank}\ \boldsymbol{N}_{6,1}=2t.
    \label{eq:pr:thm2:D_6}
\end{align}
\textbf{\textit{Step 4}}: In this step, we determine the three constraints which must be met on the space of $\boldsymbol{H}_{\{7,17\}}$.
\\
First, let $L_3=\{2,3,12,13\}$. Then,
\begin{align} 
    4t&=\mathrm{rank}\ \boldsymbol{H}_{\{2,3,12,13\}}
    \label{eq:pr:thm2:2-3-12-13}
    \\
    &=\mathrm{rank}\ \boldsymbol{H}_{\{2,12\}} + \mathrm{rank}\ \boldsymbol{H}_{\{3,13\}}
    \label{eq:pr:thm2:2-2:3-13}
    \\
    &=\mathrm{rank}\ \boldsymbol{H}_{\{2,12\}}\boldsymbol{N}_{4,2} + \mathrm{rank}\ \boldsymbol{H}_{\{3,13\}}
    \label{eq:pr:thm2:2-12:N:3-13}
    \\
    &\leq \mathrm{rank}\ \boldsymbol{H}_{\{4,14\}} + \mathrm{rank}\ \boldsymbol{H}_{\{3,13\}}
    \label{eq:pr:thm2:4-14:2-12}
    \\
    &=\mathrm{rank}\ \boldsymbol{H}_{\{3,4,13,14\}}
    \label{eq:pr:thm2:3-4-13-14}
    \\
    &\leq 4t,
    \label{eq:pr:thm2:4t}
\end{align}
where \eqref{eq:pr:thm2:2-3-12-13} and \eqref{eq:pr:thm2:2-2:3-13} follows from the fact that $L_3$ is an independent set. \eqref{eq:pr:thm2:2-12:N:3-13} is due to the invertibility of $\boldsymbol{N}_{4,2}$. \eqref{eq:pr:thm2:4-14:2-12} is because of \eqref{eq:pr:thm2:4-14}, which implies that $\boldsymbol{H}_{\{2,12\}}\boldsymbol{N}_{4,2}$ is a subspace of $\boldsymbol{H}_{\{4,14\}}$, and finally, \eqref{eq:pr:thm2:3-4-13-14} is also due to the \eqref{eq:pr:thm2:4-14}, which indicates that $\boldsymbol{H}_{\{4,14\}}$ is linearly independent of $\boldsymbol{H}_{\{3,13\}}$. Now, based on \eqref{eq:pr:thm2:2-3-12-13}, \dots, \eqref{eq:pr:thm2:4t}, we have
\begin{equation} \label{eq:pr:thm2:3-4-13-14:2}
    \mathrm{rank}\ \boldsymbol{H}_{\{3,4,13,14\}}=4t.
\end{equation}
Moreover, using the same argument in \eqref{eq:pr:thm2:2-3-12-13}, $\cdots$, \eqref{eq:pr:thm2:3-4-13-14:2} for $L_4=\{1,2,11,12\}$ and $L_5=\{1,3,11,13\}$ by considering the facts that (i) each $L_4$ and $L_5$ is an independent set, (ii) $\boldsymbol{N}_{6,1}$ and $\boldsymbol{N}_{5,3}$ are invertible, respectively, due to \eqref{eq:pr:thm2:D_6} and \eqref{eq:pr:thm2:D_3}, (iii) $\boldsymbol{H}_{\{1,11\}}\boldsymbol{N}_{6,1}$ and $\boldsymbol{H}_{\{2,12\}}\boldsymbol{N}_{5,3}$, respectively, are subspace of  $\boldsymbol{H}_{\{6,16\}}$ and $\boldsymbol{H}_{\{5,15\}}$ due to \eqref{eq:pr:thm2:6-16} and \eqref{eq:pr:thm2:5-15}, and finally (iv) $\boldsymbol{H}_{\{6,16\}}$ and $\boldsymbol{H}_{\{5,15\}}$, respectively, are linear independent of $\boldsymbol{H}_{\{2,12\}}$ and $\boldsymbol{H}_{\{1,11\}}$, respectively, due to \eqref{eq:pr:thm2:6-16} and \eqref{eq:pr:thm2:5-15}, we will have
\begin{align}
    \mathrm{rank}\ \boldsymbol{H}_{\{2,6,12,16\}}&=4t,
    \label{eq:pr:thm2:2-6-12-16}
    \\
    \mathrm{rank} \ \boldsymbol{H}_{\{1,5,11,15\}}&=4t.
    \label{eq:pr:thm2:1-5-11-15}
\end{align}
Let $C_{8}^{\prime}=\{1,5,11,15\}\subset C_8=\{1,5,11,15,7,17\}$, $C_{9}^{\prime}=\{2,6,12,16\}\subset C_9=\{2,6,12,16,7,17\}$ and $C_{10}^{\prime}=\{3,4,13,14\}\subset C_{10}=\{3,4,13,14,7,17\}$. Then, for $i=8,9,10$, we have
\begin{align}
    6t&= \mathrm{rank} \ \boldsymbol{H}
    \label{eq"pr:thm2:C_i:first}
    \\
    &\geq \mathrm{rank} \ \boldsymbol{H}_{\{i\}\cup (\{i+10\}\cup C_i)}
    \label{eq:pr:thm2:C_i:8}
    \\
    &= \mathrm{rank} \ \boldsymbol{H}_{\{i+10\}\cup C_i} + t
    \label{eq:pr:thm2:C_i:9}
    \\ 
    &= \mathrm{rank} \ \boldsymbol{H}_{C_i} + 2t
    \label{eq:pr:thm2:C_i:10}
    \\
    &\geq \mathrm{rank} \ \boldsymbol{H}_{C_{i}^{\prime}} + 2t
    \label{eq:pr:thm2:C_i:11}
    \\
    &=6t,
    \label{eq:pr:thm2:C_i:12}
\end{align}
where \eqref{eq"pr:thm2:C_i:first} is because we desire $\lambda_{\mathcal{I}_2,q}=6$, \eqref{eq:pr:thm2:C_i:8} is due to \eqref{eq:rem:dec-con-3}, \eqref{eq:pr:thm2:C_i:9} and \eqref{eq:pr:thm2:C_i:10} are, respectively, because of \eqref{eq:rem:dec-con-1} and the fact that $\{i+10\}\cup C_i\subset B_i$ and $C_i\subset B_{i+10}$. \eqref{eq:pr:thm2:C_i:11} is due to $C_{i}^{\prime}\subset C_i$, and finally, \eqref{eq:pr:thm2:C_i:12} follows from \eqref{eq:pr:thm2:3-4-13-14:2}, \eqref{eq:pr:thm2:2-6-12-16} and \eqref{eq:pr:thm2:1-5-11-15}, respectively for $i=10, 9$ and 8.
Now, based on \eqref{eq"pr:thm2:C_i:first}, \dots, \eqref{eq:pr:thm2:C_i:12}, we have
\begin{equation}
    \mathrm{rank} \ \boldsymbol{H}_{C_i}=\mathrm{rank} \ \boldsymbol{H}_{C_{i}^{\prime}}=4t, \ \ \ \ \forall i=8,9,10,
\end{equation}
which implies that each $\boldsymbol{H}_{C_{i}\backslash C_{i}^{\prime}}$ must be expressed as a linear combination of $\boldsymbol{H}_{C_{i}^{\prime}}$ for $i=8,9,10$. This, respectively, results in
\begin{align}
     \boldsymbol{H}_{\{7,17\}}&=\left [\begin{array}{c|c} \boldsymbol{H}_{\{1,11\}}\boldsymbol{N}_{7,1} & \boldsymbol{H}_{\{5,15\}}\boldsymbol{N}_{7,5} \end{array} \right],
     \label{eq:pr:thm2:7-17-1}
    \\ 
     \boldsymbol{H}_{\{7,17\}}&=\left [\begin{array}{c|c} \boldsymbol{H}_{\{2,12\}}\boldsymbol{N}_{7,2} & \boldsymbol{H}_{\{6,16\}}\boldsymbol{N}_{7,6} \end{array} \right],
    \label{eq:pr:thm2:7-17-2}
    \\
    \boldsymbol{H}_{\{7,17\}}&=\left [\begin{array}{c|c} \boldsymbol{H}_{\{3,13\}}\boldsymbol{N}_{7,3} & \boldsymbol{H}_{\{4,14\}}\boldsymbol{N}_{7,4} \end{array} \right].
    \label{eq:pr:thm2:7-17-3}
\end{align}

\textbf{\textit{Step 5}}: In this step, we illustrate that meeting the three constraints \eqref{eq:pr:thm2:7-17-1}, \eqref{eq:pr:thm2:7-17-2} and \eqref{eq:pr:thm2:7-17-3} on the space of $\boldsymbol{H}_{7,17}$ will lead to a contradiction over any field with characteristic two.
\\
First, in \eqref{eq:pr:thm2:7-17-1}, \eqref{eq:pr:thm2:7-17-2} and \eqref{eq:pr:thm2:7-17-3}, we substitute $\boldsymbol{H}_{4,14}$, $\boldsymbol{H}_{5,15}$, $\boldsymbol{H}_{6,16}$ with their equal term, respectively, in \eqref{eq:pr:thm2:4-14}, \eqref{eq:pr:thm2:5-15} and \eqref{eq:pr:thm2:6-16}. Then, equating the coefficients of each $\boldsymbol{H}_{\{i,i+10\}}, i=1,2,3$, respectively, results in
\begin{align}
  \boldsymbol{N}_{7,1}&=\boldsymbol{N}_{4,1}\boldsymbol{N}_{7,4}=\boldsymbol{N}_{6,1}\boldsymbol{N}_{7,6}, \label{eq:pr:thm2:7-4-6}
  \\
  \boldsymbol{N}_{7,2}&=\boldsymbol{N}_{4,2}\boldsymbol{N}_{7,4}=\boldsymbol{N}_{5,2}\boldsymbol{N}_{7,5}, \label{eq:pr:thm2:7-4-5}
  \\
  \boldsymbol{N}_{7,3}&=\boldsymbol{N}_{5,3}\boldsymbol{N}_{7,5}=\boldsymbol{N}_{6,3}\boldsymbol{N}_{7,6}. \label{eq:pr:thm2:7-5-6}
 \end{align}
Now, because each $\boldsymbol{N}_{4,1}$, $\boldsymbol{N}_{4,2}$, $\boldsymbol{N}_{5,2}$, $\boldsymbol{N}_{5,3}$, $\boldsymbol{N}_{6,1}$ and $\boldsymbol{N}_{6,3}$ is invertible, then, from \eqref{eq:pr:thm2:7-4-6}, \eqref{eq:pr:thm2:7-4-5} and \eqref{eq:pr:thm2:7-5-6}, the column space of all $\boldsymbol{N}_{7,i}$'s for $i\in [7]$ will be equal. Thus, we must have $\mathrm{rank}\ \boldsymbol{N}_{7,i}=2t$, $\forall i\in[7]$, since otherwise $\mathrm{rank}\ \boldsymbol{H}_{\{7,17\}}<2t$, which contradicts \eqref{eq:rem:dec-con-1} for $i=7$ with $B_{7}^{\prime}=\{17\}$. Then, \eqref{eq:pr:thm2:7-4-6}, \eqref{eq:pr:thm2:7-4-5} and \eqref{eq:pr:thm2:7-5-6}, respectively gives
\begin{align}
  \boldsymbol{N}_{4,1}\boldsymbol{N}_{7,4}&=\boldsymbol{N}_{6,1}\boldsymbol{N}_{7,6}, \label{eq:pr:thm2:1:7-4-6}
  \\
  \boldsymbol{N}_{4,2}\boldsymbol{N}_{7,4}&=\boldsymbol{N}_{5,2}\boldsymbol{N}_{7,5}, \label{eq:pr:thm2:2:7-4-5}
  \\
  \boldsymbol{0}_{2t\times 2t}&=\boldsymbol{N}_{6,3}\boldsymbol{N}_{7,6}+\boldsymbol{N}_{5,3}\boldsymbol{N}_{7,5}, \label{eq:pr:thm2:3:7-5-6}
 \end{align}
where \eqref{eq:pr:thm2:3:7-5-6} is achieved from \eqref{eq:pr:thm2:7-5-6} assuming that the field has characteristic two. So,
\begin{equation}
\begin{bmatrix}
   \boldsymbol{N}_{4,1} \\
   \boldsymbol{N}_{4,2} \\
   \boldsymbol{0}_{2t\times 2t}
\end{bmatrix}
\boldsymbol{N}_{7,4}
=
\begin{bmatrix}
   \boldsymbol{0}_{2t\times 2t}                  \\
   \boldsymbol{N}_{5,2}\\
   \boldsymbol{N}_{5,3}
\end{bmatrix}
\boldsymbol{N}_{7,5}
+
\begin{bmatrix}
   \boldsymbol{N}_{6,1} \\
    \boldsymbol{0}_{2t\times 2t}                   \\
   \boldsymbol{N}_{6,3}
\end{bmatrix}
\boldsymbol{N}_{7,6}.
\end{equation}
Thus,
\begin{equation}
    \boldsymbol{H}_{\{4,14\}}\boldsymbol{N}_{7,4}=\boldsymbol{H}_{\{5,15\}}\boldsymbol{N}_{7,5}+\boldsymbol{H}_{\{6,16\}}\boldsymbol{N}_{7,6},
\end{equation}
which means that each $\boldsymbol{H}_{\{4,14\}}$,  $\boldsymbol{H}_{\{5,15\}}$ and $\boldsymbol{H}_{\{6,16\}}$ can be expressed as a linear combination of the other two, resulting in
\begin{equation}\label{eq:pr:thm2:contradiction4}
    \mathrm{rank}\ \boldsymbol{H}_{\{i\}\cup B_i}= \mathrm{rank}\ \boldsymbol{H}_{B_i},  \ \ \ \ \forall i=4,5,6,
\end{equation}
which contradicts the decoding condition in \eqref{eq:dec-cond} for $i=4, 5, 6$. Thus, users $u_4, u_5$ and $u_6$ are not able to decode their requested messages over any field with characteristic two. This completes the proof.
\end{proof}

\begin{prop}
There exists a scalar nonlinear code over the binary field which can achieve the broadcast rate of $\mathcal{I}_2$.
\end{prop}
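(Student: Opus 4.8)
The plan is to first nail down the broadcast rate from below and then exhibit an explicit scalar binary code that meets it. First I would use the fact, already noted in the proof of Theorem~\ref{thm:02}, that $L=\{1,2,3,11,12,13\}$ is an independent set of $\mathcal{I}_2$; hence $\beta_{\mathcal{I}_2}\ge\beta_{\mathrm{MAIS}(\mathcal{I}_2)}\ge 6$, so it suffices to build a $(1,6)$ index code $\mathcal{C}_{\mathcal{I}_2}=(\phi_{\mathcal{I}_2},\{\psi_{\mathcal{I}_2}^i\})$ under which every $u_i$, $i\in[26]$, recovers $x_i$; any such code then attains $\beta(\mathcal{C}_{\mathcal{I}_2})=6=\beta_{\mathcal{I}_2}$.

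For the construction I would take as a template an optimal $\mathbb{F}_3$-linear code of $\mathcal{I}_2$ (such a code exists because $\mathcal{I}_2$ realises the non-Fano configuration, which is linearly representable precisely over fields of odd characteristic): it assigns the standard basis of $\mathbb{F}_3^{6}$ to the six indices of $L$ and a prescribed $\mathbb{F}_3$-combination of them to every other message. This template cannot be copied over the binary field, and the reason is exactly the obstruction isolated in the proof of Theorem~\ref{thm:02}: over characteristic two the blocks $\boldsymbol{H}_{\{4,14\}}$, $\boldsymbol{H}_{\{5,15\}}$, $\boldsymbol{H}_{\{6,16\}}$ are forced into a common $2t$-dimensional subspace, which destroys the decoders of $u_4,u_5,u_6$. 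I would therefore keep the six coded symbols $z_1,\dots,z_6$ XOR-based wherever characteristic two does no harm, and repair the collapsing subspace by inserting nonlinear correction terms -- concretely, replacing a few linear combinations by expressions of the form $x_a\oplus x_b\oplus(x_c\wedge x_d)$ -- chosen so that (i) the induced map from the six ``block'' variables $(x_4,x_5,x_6,x_{14},x_{15},x_{16})$ to $(z_1,\dots,z_6)$ is a bijection for each fixed value of the remaining twenty messages, and (ii) every product term involves only coordinates that the users relying on a purely linear decode already hold as side information.

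The verification then splits in two. For $i\in\{4,5,6\}$ the bijectivity in the block variables lets $u_i$ solve for $x_i$ from the six received symbols even though it knows none of $x_4,x_5,x_6,x_{14},x_{15},x_{16}$, and for $i\in\{14,15,16\}$ and $i\in\{7,17\}$ the user knows all but two messages, so the same map restricted to the two relevant unknowns is injective. For every remaining $i$, the decoder $\psi_{\mathcal{I}_2}^i$ is linear: using $\{x_j:j\notin B_i\cup\{i\}\}$ it cancels all known contributions from the relevant symbols -- including, by property (ii), the arguments of every AND term -- and solves a small full-rank linear system for $x_i$. The pronounced symmetry of $\mathcal{I}_2$ (the involution $\{1,2,3\}\leftrightarrow\{11,12,13\}$ together with the three-fold symmetry among the ``colours'' $4,5,6$ paired with $14,15,16$, with $8,9,10$ paired with $18,19,20$, and with the pairs $\{21,22\},\{23,24\},\{25,26\}$) collapses the twenty-six checks to a handful of representative ones.

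The step I expect to be hardest is the design itself: placing the nonlinear terms so that they simultaneously break the characteristic-two dependency among $\boldsymbol{H}_{\{4,14\}},\boldsymbol{H}_{\{5,15\}},\boldsymbol{H}_{\{6,16\}}$ -- thereby restoring decodability for $u_4,u_5,u_6$, the very users ruled out by Theorem~\ref{thm:02} -- while not disturbing any of the roughly twenty linear decoders. The latter is the subtle constraint, since a product term is ``invisible'' only to users that already know both of its factors, so the admissible placements are tightly pinned down by the side-information sets; once one is found, the rate being exactly $6$ is immediate from the MAIS bound proved at the start.
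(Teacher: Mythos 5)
The proposition is existential, and in this setting the only way to close it is to actually exhibit the code; your write-up never does. You correctly reduce the problem to constructing a $(1,6)$ binary code (via $\beta_{\mathrm{MAIS}(\mathcal{I}_2)}=6$, as the paper does), but the remainder of the argument is a blueprint for \emph{finding} a code rather than a code: you name the properties a construction should have --- bijectivity in the block variables $(x_4,x_5,x_6,x_{14},x_{15},x_{16})$, invisibility of the AND-terms to linear decoders --- concede that ``the step I expect to be hardest is the design itself,'' and stop. That design is precisely the content of the proposition and of the paper's proof, which writes out six explicit symbols $z_1,\dots,z_6$ (linear parts plus the symmetric quadratics $x_a x_b\oplus x_a x_c\oplus x_b x_c$) and then verifies each of the $26$ decoders one by one.

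Two of the design constraints you posit also do not match what a correct construction must look like. Your criterion (ii) --- ``every product term involves only coordinates that the users relying on a purely linear decode already hold as side information'' --- is too strong: in the paper's code, a user such as $u_{22}$ (with $B_{22}=\{4,6,14,16,21\}$) does \emph{not} have $x_4,x_6$ in its side information, yet $z_2$ contains the term $x_4x_6\oplus x_4x_7\oplus x_6x_7$; the decoder is a multi-stage one that first recovers $x_4$ from $z_3$ and $x_6$ from $z_5$, then cancels the nonlinear term. So a one-pass linear decode for ``every remaining $i$'' is not what happens, and insisting on it would likely make the design infeasible. Second, the opening gambit of transplanting an optimal $\mathbb{F}_3$-linear code of $\mathcal{I}_2$ rests on an unproved claim: Theorem~\ref{thm:02} only establishes $\lambda_{\mathcal{I}_2,q}>6$ for $q$ even; nothing in the paper asserts $\lambda_{\mathcal{I}_2,3}=6$, and the non-Fano analogy by itself does not certify it for this particular $26$-user side-information structure. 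In short, the lower bound and the overall shape of your plan are right and match the paper, but the load-bearing part --- the explicit nonlinear encoder and a decodability check for all $26$ users --- is absent, and the auxiliary constraints you impose on it are not the ones a working code actually satisfies.
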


\begin{proof}
First, note that the set $L=\{1,2,3,11,12,13\}$ is a MAIS set of $\mathcal{I}_2$. So, $\beta_{\mathrm{MAIS(\mathcal{I}_2)}}=6$. Now, we show that $\beta(\mathcal{C}_{\mathcal{I}_2})=6$ for a scalar nonlinear index code $\mathcal{C}_{\mathcal{I}_2}=(\phi_{\mathcal{I}_2}, \{\psi_{\mathcal{I}_2}^i\})$, where the encoder and decoder functions are as follows. First, encoder $\phi_{\mathcal{I}_2}$ maps the messages $x_i, i\in [26]$ to the coded messages $z_j, j\in [6]$ as below
\begin{numcases}{}
   \ \ z_1&$= x_1 \oplus x_4 \oplus  x_6 \oplus  x_7 \oplus  x_{10} \oplus  x_{21}$,\nonumber 
    \\
   \ \  z_2&$= x_{11} \oplus x_{14} \oplus  x_{16} \oplus  x_{17} \oplus  x_{20} \oplus  x_{22} \oplus$ \nonumber
    \\
   \ \   & \ \ \  $(x_{4}x_{6} \oplus x_{4}x_{7} \oplus x_{6}x_{7})$, \nonumber
    \\
   \ \  z_3&$= x_2 \oplus x_4 \oplus  x_5 \oplus  x_7 \oplus  x_8 \oplus  x_{23}$,\nonumber
    \\
   \ \  z_4&$= x_{12} \oplus x_{14} \oplus  x_{15} \oplus  x_{17} \oplus  x_{18} \oplus  x_{24} \oplus$ \nonumber
    \\
   \ \  & \ \ \ $(x_{4}x_{5} \oplus x_{4}x_{7} \oplus x_{5}x_{7})$,\nonumber
    \\
   \ \  z_5&$= x_3 \oplus x_5 \oplus  x_6 \oplus  x_7 \oplus  x_{9} \oplus  x_{25}$,\nonumber
    \\
   \ \  z_6&$= x_{13} \oplus x_{15} \oplus  x_{16} \oplus  x_{17} \oplus  x_{19} \oplus  x_{26} \oplus$ \nonumber
    \\
   \ \   & \ \ \ $(x_{5}x_{6} \oplus x_{5}x_{7} \oplus x_{6}x_{7})$.\nonumber
\end{numcases}
Now each decoder $\psi_{\mathcal{I}_2}^{i}, i\in[26]$ decodes the requested message $x_i$ using the received coded messages $z_j, j\in[6]$ and the side information $S_i$ as follows:
\begin{itemize}[leftmargin=*]
    \item Users $u_1, u_{11}, u_2, u_{12}, u_3$ and $u_{13}$ can decode their requested message, respectively, from $z_1$, $z_2$, $z_3$, $z_4$, $z_5$ and $z_6$.
    \item Users $u_7$ and $u_{17}$ both can decode $x_7$ from either $z_1$, $z_3$ or $z_5$. Then, $u_{17}$ can decode $x_{17}$ from either $z_2$, $z_4$ or $z_6$.
    \item Users $u_{14}, u_{15}$ and $u_{16}$, respectively, first decode $x_{4}$ from $z_1$, $x_{5}$ from $z_3$ and $x_{6}$ from $z_5$. Then, they can decode, respectively, $x_{14}$ from $z_2$, $x_{15}$ from $z_4$ and $x_{16}$ from $z_6$.
    \item Users $u_{21}$ and $u_{22}$ both first decode both $x_{4}$ from $z_3$ and $x_{6}$ from $z_5$. Then, $u_{21}$ can decode $x_{21}$ from $z_1$. User $u_{22}$ decodes $x_{14}$ from $z_4$ and $x_{16}$ from $z_6$. Now, $u_{22}$ can decode $x_{22}$ from $z_2$.
    \item Users $u_{23}$ and $u_{24}$ both first decode $x_{4}$ from $z_1$ and $x_{5}$ from $z_5$. Then, $u_{23}$ can decode $x_{23}$ from $z_3$. User $u_{24}$ decodes $x_{14}$ from $z_2$ and $x_{15}$ from $z_6$. Now, $u_{24}$ can decode $x_{24}$ from $z_4$.
    \item Users $u_{25}$ and $u_{26}$ both first decode $x_{5}$ from $z_3$ and $x_{6}$ from $z_1$. Then, $u_{25}$ can decode $x_{25}$ from $z_5$. User $u_{26}$ decodes $x_{15}$ from $z_4$ and $x_{16}$ from $z_2$. Now, $u_{26}$ can decode $x_{26}$ from $z_6$.
    \item Users $u_8, u_9$ and $u_{10}$, respectively, decode $(x_5\oplus x_7)$ from $z_5$, $(x_6\oplus x_7)$ from $z_1$ and $(x_4\oplus x_5)$ from $z_3$. Then, they are able to decode their requested messages $x_8$ from $z_3$, $x_9$ from $z_5$ and $x_{10}$ from $z_1$.
    \item User $u_{18}$, first decodes $x_5\oplus x_7$ from $z_5$. Then, it adds $z_4\oplus z_6$ to achieve (after removing the messages in its side information) $x_{18}\oplus (x_4\oplus x_6)(x_5\oplus x_7)$ (note that the term $x_5x_7$ is canceled out). Now, because it has $x_4$ and $x_6$ in its side information and has already decoded $x_5\oplus x_7$, then it will be able to decode its desired message $x_{18}$.
    \item User $u_{19}$, first decodes $x_6\oplus x_7$ from $z_1$. Then, it adds $z_2\oplus z_6$ to achieve (after removing the messages in its side information) $x_{19}\oplus (x_4\oplus x_5)(x_6\oplus x_7)$ (note that the term $x_6x_7$ is canceled out). Now, because it has $x_4$ and $x_5$ in its side information and has already decoded $x_6\oplus x_7$, then it will be able to decode its desired message $x_{19}$.
    \item User $u_{20}$, first decodes $x_4\oplus x_7$ from $z_3$. Then, it adds $z_2\oplus z_4$ to achieve (after removing the messages in its side information) $x_{20}\oplus (x_5\oplus x_6)(x_4\oplus x_7)$ (note that the term $x_4x_7$ is canceled out). Now, because it has $x_5$ and $x_6$ in its side information and has already decoded $x_4\oplus x_7$, then it will be able to decode its desired message $x_{20}$.
    \item Users $u_4, u_5$ and $u_6$ do as follows. Note, in binary field, $x_{i}^{2}=x_i$. User $u_4$ first decodes $(x_4\oplus x_6)$ and $(x_4\oplus x_5)$, respectively, from $z_1$ and $z_3$. Then, it multiplies them to achieve $x_4\oplus x_4x_6\oplus x_4x_5\oplus x_5x_6$. User $u_5$ first decodes $(x_4\oplus x_5)$ and $(x_5\oplus x_6)$, respectively, from $z_3$ and $z_5$. Then, it multiplies them to achieve $x_5\oplus x_4x_6\oplus x_4x_5\oplus x_5x_6$. User $u_6$ first decodes $(x_4\oplus x_6)$ and $(x_5\oplus x_6)$, respectively, from $z_1$ and $z_5$. Then, it multiplies them to achieve $x_6\oplus x_4x_6\oplus x_4x_5\oplus x_5.x_6$.\\
    On the other hand, we add $z_2\oplus z_4\oplus z_6$ to cancel the terms $x_{14}, x_{15}, x_{16}$ and $x_4x_7, x_5x_7, x_6x_7$. Now, it can be observed that users $u_4, u_5$ and $u_6$ will obtain $x_4x_6\oplus x_4x_5\oplus x_5x_6$, and so, they are able to decode their desired message $x_4, x_5$ and $x_6$, respectively.
\end{itemize}

\end{proof}

\begin{rem}
The specific constraints on the optimal solution of each index coding subinstances $\mathcal{I}_1$ and $\mathcal{I}_2$ were inspired by the constraints on the solution of each network coding subinstances in \cite{Dougherty2005} (which were denoted there by $\mathcal{N}_1$ and $\mathcal{N}_2$). Using similar techniques as in \cite{Dougherty2005}, one can show that linear coding is insufficient over non-commutative rings and modules, where linear operations are well-defined \cite{Connelly20181,Connelly20182}. We leave the technical details for a future version of this work.
\end{rem}


\section{Concluding Remarks} \label{sec:06}
In this paper, we addressed the open problem of proving the necessity of nonlinear coding for achieving the symmetric rate of the unicast index coding problem. This proof was made by providing a unicast index coding instance, consisting of two separate subinstances which are connected in a two-way method or no-way (disjoint) method in terms of their side information. We proved that for the first instance linear coding is optimal only over a finite field with characteristic two. However, for the second instance, we proved that linear coding with characteristic two cannot be optimal while an optimal nonlinear code was provided over the binary field. This, in turn, settles the insufficiency of linear codes for unicast setting with symmetric message rate. One main advantage of the structure of our instance is its significant simplicity, having only 36 users while the example in \cite{Effros2015} has 74 messages and 80 users, and also the instance in \cite{Maleki2014} contains more than 200 users.
\\

\IEEEpeerreviewmaketitle


\bibliographystyle{IEEEtran}
	\bibliography{References}

\end{document}